\DeclareMathOperator*{\minimize}{minimize:}
\DeclareMathOperator*{\maximize}{maximize:}
\newcommand{\densityH}[1][\mathcal{H}]{\tilde{D}(#1)}
\newcommand{\normdensityH}[1][\mathcal{H}]{D(#1)}
\newcommand{\HH}{\ensuremath{\mathcal{H}}}
\newcommand{\Hinf}{\ensuremath{\mathcal{H}_\infty}}
\newcommand{\Hfin}{\ensuremath{\mathcal{H}_N}}
\newcommand{\reduced}{\tau_A}
\newcommand{\meas}{\Phi_M^{AB}}
\newcommand{\nondest}{\Phi_T^{AB}}
\newcommand{\sapub}{S_a} 
\newcommand{\sbpub}{S_b}
\newcommand{\sapriv}{S_\alpha}
\newcommand{\sbpriv}{S_\beta}
\newcommand{\rhoo}{\tilde{\rho}^\infty}
\newcommand{\rhon}{\ensuremath{\tilde{\rho}^\Pi}} 
\newcommand{\nrho}[1][N]{\ensuremath{\tilde{\rho}^{#1}}}
\newcommand{\rhot}{\tilde{\rho}}
\newcommand{\sigmat}{\tilde{\sigma}}
\newcommand{\taut}{\tilde{\tau}}
\newcommand{\sigman}{\ensuremath{\tilde{\sigma}^\Pi}} 
\newcommand{\sinf}{\ensuremath{\mathbf{S}_\infty}}
\newcommand{\sfin}{\ensuremath{\mathbf{S}_N}}
\newcommand{\pos}[1]{\textrm{Pos}\left(#1\right)} 
\newcommand{\fqkd}{f^{QKD}}
\newcommand{\e}{\epsilon}
\newcommand{\ee}{\epsilon'}
\newcommand{\eee}{\epsilon''}
\newcommand{\disp}[1]{\hat{D}\left(#1\right)}
\newcommand{\dispd}[1]{\hat{D}^\dagger\left(#1\right)}
\newcommand{\obsn}{\hat{n}_{\beta_i}} 
\newcommand{\obsnk}{\hat{n}_{\beta_k}} 
\newcommand{\obsnsq}{\hat{n}^2_{\beta_i}} 
\newcommand{\obsnsqk}{\hat{n}^2_{\beta_k}} 
\newcommand{\expn}{\left\langle\obsn\right\rangle}
\newcommand{\expnsq}{\left\langle\obsnsq\right\rangle}
\newcommand{\expneff}{\left\langle\obsn\right\rangle^\text{eff}}
\newcommand{\expnsqeff}{\left\langle\obsnsq\right\rangle^\text{eff}}
\newcommand{\PiN}{\Pi^N_{B_{\beta_i}}}
\newcommand{\PiNn}{\bar{\Pi}^N_{B_{\beta_i}}}
\newcommand{\noisy}[1]{\left[#1 \right]'}
\newcommand{\nsyobsn}{\noisy{\hat{n}_{\sqrt{\eta_d} \beta_i}}}
\newcommand{\nsyobsnsq}{\noisy{\hat{n}^2_{\sqrt{\eta_d} \beta_i}}}
\newcommand{\nsyexpn}{\left\langle\nsyobsn\right\rangle}
\newcommand{\nsyexpnsq}{\left\langle\nsyobsnsq\right\rangle}
\newcommand{\vel}{\nu_{el}} 
\newcommand{\raising}{\hat{a}^\dagger}
\newcommand{\lowering}{\hat{a}}
\newcommand{\Hflag}{\mathcal{H}_F}
\newtheorem{thm}{Theorem}
\newtheorem{cor}{Corollary}
\newtheorem{lemma}{Lemma}
\begin{document}

\title{Dimension Reduction in Quantum Key Distribution for Continuous- and Discrete-Variable Protocols}

\author{Twesh Upadhyaya}
\email{twesh.upadhyaya@uwaterloo.ca}
\affiliation{Institute for Quantum Computing and Department of Physics and Astronomy\\University of Waterloo, Waterloo, Ontario, Canada N2L 3G1}
\author{Thomas van Himbeeck}
\affiliation{Institute for Quantum Computing and Department of Physics and Astronomy\\University of Waterloo, Waterloo, Ontario, Canada N2L 3G1}
\affiliation{Department of Electrical \& Computer Engineering, University of Toronto, Toronto, Ontario, Canada M5S 3G4}
\author{Jie Lin}
\affiliation{Institute for Quantum Computing and Department of Physics and Astronomy\\University of Waterloo, Waterloo, Ontario, Canada N2L 3G1}
\author{Norbert L\"{u}tkenhaus}
\affiliation{Institute for Quantum Computing and Department of Physics and Astronomy\\University of Waterloo, Waterloo, Ontario, Canada N2L 3G1}

\date{\today}

\begin{abstract}
We develop a method to connect the infinite-dimensional description of optical continuous-variable quantum key distribution (QKD) protocols to a finite-dimensional formulation. The secure key rates of the optical QKD protocols can then be evaluated using recently-developed reliable numerical methods for key rate calculations. We apply this method to obtain asymptotic key rates for discrete-modulated continuous-variable QKD protocols, which are of practical significance due to their experimental simplicity and potential for large-scale deployment in quantum-secured networks. Importantly, our security proof does not require the photon-number cutoff assumption relied upon in previous works. We also demonstrate that our method can provide practical advantages over the flag-state squasher when applied to discrete-variable protocols.
\end{abstract}

\maketitle

\section{Introduction} \label{Intro}
Quantum key distribution (QKD) \cite{Bennett1984,Ekert1991} enables two remote parties, Alice and Bob, to establish information-theoretically secure keys even in the presence of an eavesdropper, Eve. These keys can then be used in many other cryptographic applications, such as the one-time pad. To prove QKD to be secure with a specified key rate,
we need to assume a quantum-mechanical model for Alice and Bob's devices, but do not have to assume anything about the processing power available to the eavesdropper \cite{Colbeck2011}. Reviews of QKD protocols can be found in \cite{Scarani2009,Xu2020,Pirandola2020}. 

For a given QKD protocol, the goal of a security proof is to find a lower bound on the secure key rate. Analytical methods for this task can be very involved, tend to be restricted to symmetric protocols, and can introduce looseness in the lower bounds. These issues are ameliorated by the recent development of tight, reliable numerical methods for finding secure key rates \cite{Coles2016,Winick2018}. At a high level, these methods determine the key rate by solving a particular convex optimization over the set of quantum states that could be held by Alice and Bob. Thus, when the bipartite Hilbert space is infinite-dimensional, the numerical methods cannot be used directly. Fortunately, for many discrete-variable (DV) protocols, the numerical methods can be applied by using the squashing map \cite{Beaudry2008,Tsurumaru2010,Gittsovich2014} or the more general flag-state squasher \cite{Zhang2021} to reduce the problem to finite dimensions. However, these squashing approaches do not seem applicable to continuous-variable (CV) protocols. Additionally, even for DV protocols, the flag-state squasher can have challenging runtimes \cite{Li2020}.

Discrete-modulated continuous-variable QKD (DMCVQKD) is a family of protocols that utilize existing telecommunication infrastructure, including homodyne or conjugate homodyne detection \cite{Ralph1999,Hillery2000,Silberhorn2002}. They are thus promising candidates for deployment in large scale quantum-secured networks. In comparison to Gaussian-modulated CVQKD \cite{Grosshans2002,Grosshans2003,Weedbrook2004}, discrete modulation is less demanding on the source modulator and on the error-correction protocols, yet is expected to achieve similar key rates. It is thus of interest to establish security proofs for DMCVQKD. Of particular interest is DMCVQKD with four or more modulated states, which is expected to outperform protocols with a smaller constellation. While there are analytic asymptotic security proofs of DMCVQKD with two \cite{Zhao2009} or three \cite{Bradler2018} modulated states, they are difficult to generalize to more states. Recent works have numerically studied asymptotic security proofs for DMCVQKD with any number of modulated states \cite{Ghorai2019,Lin2019}. However, these approaches assume the state is finite-dimensional, known as the photon-number cutoff assumption. Thus, while these results seem numerically plausible, they do not constitute a rigorous asymptotic security proof, as the cutoff assumption cannot be justified. A full finite-key analysis of binary-modulated DMCVQKD has also been recently completed in {\cite{Matsuura2021}}. 

The main contribution of this paper is a method to tightly lower bound the key rate of an infinite-dimensional QKD protocol in terms of a finite-dimensional convex optimization. In combination with existing numerical tools for solving finite-dimensional convex optimizations, this enables us to find tight, reliable key rates for general device-dependent QKD protocols in infinite-dimensional Hilbert spaces. Our dimension reduction method can also be applied to study other quantum information tasks, such as entanglement verification \cite{Killoran2011}.

As a result, our method can provide a complete asymptotic security proof for discrete-modulated continuous-variable protocols with any number of modulated states, with tight key rates and without relying on the photon-number cutoff assumption. While our focus in this work is on calculating asymptotic key rates, we expect key elements of our method to lift to a finite-key analysis. 

Our dimension reduction method also provides an alternative approach to study protocols admitting a flag-state squasher. We consider unbalanced phase-encoded BB84 as an example, and show that our method can have an improved runtime compared to the flag-state squasher, while providing similar results.

The remainder of this paper is structured as follows. In Sec. \ref{keyrateoptim}, we review the basic steps of a QKD protocol and how the key rate can be formulated as a convex optimization. In Sec. \ref{framework}, we develop our framework for dimension reduction, in more generality than is needed for QKD. In Sec. \ref{frameworkapp}, we then specialize our general method to asymptotic key rate calculations. In Sec. \ref{numerical}, we show how to implement the relevant optimizations numerically. In Sec. \ref{dmcvqkdW}, we calculate key rates for DMCVQKD, including modelling postselection and trusted noise. In Sec. \ref{fss} we compare our method to the flag-state squasher. Finally, we provide concluding remarks and avenues for future work in Sec. \ref{conclusion}. Certain technical details are relegated to the Appendices.

\section{Background: QKD Protocols and Security Analysis} \label{keyrateoptim}
\subsection{Generic QKD Protocol Steps}
We first review the basic steps of a generic QKD protocol. Alice and Bob have access to an uncharacterized quantum channel and an authenticated, public classical channel. By the source-replacement scheme \cite{Bennett1992,Grosshans2003SR,Curty2004,Ferenczi2012}, any prepare-and-measure (P\&M) protocol can be equivalently viewed as an entanglement-based (EB) one. Thus, without loss of generality we consider entanglement-based protocols.
\begin{enumerate}
\item Alice and Bob establish a quantum state $\rho_{AB}$. 
\item Alice and Bob measure their subsystems with positive operator-valued measures (POVMs) $\{P_A^i\}$ and $\{P_B^j\}$. To each outcome $i,j$, they associate two pieces of classical data: a public announcement $a_i$, $b_j$ and a private measurement result $\alpha_i$, $\beta_j$. The respective alphabets from which the values are drawn are $\sapub, \sbpub, \sapriv,\sbpriv$. We can think of the classical announcements as partitioning the data.
\end{enumerate}
After repeating the previous two steps for a large number of rounds, Alice and Bob proceed to the classical phase. 
\begin{enumerate}
\setcounter{enumi}{2}
\item Alice and Bob choose a random subset of the rounds to use for parameter estimation. For these testing rounds, they announce their public and private results. This allows them to determine the expectations $\gamma_i$ of some testing observables $\Gamma_i$.
\item Alice and Bob announce their public data. Based on the joint announcements, they may decide to discard some rounds. This is represented by a binary function $d: \sapub \times \sbpub \rightarrow \{0,1\} $; where $d=0$ if the signal is kept and $d=1$ if discarded. 
\item Based on the public announcements and their private data, one party performs the \emph{key map}. When Alice (Bob) performs the key map, it is conventionally referred to as direct (reverse) reconciliation. In the following discussion, we consider the case where Bob performs the key map. For a $k$-ary key, the key map is a function $g: \sapub \times \sbpub \times \sbpriv \rightarrow \{0,1,...k-1,\perp\}$. The $\perp$ symbol is only used to flag the discarded or \emph{sifted} signals, so $g(a,b,\beta)=\ \perp \iff d(a,b)=1$. 
\item Alice and Bob then perform error correction over the classical channel to get Alice's data to agree with the sifted key established by Bob. 
\item Alice and Bob perform privacy amplification using a two-universal hash function to obtain the final shared secret key. 
\end{enumerate}
In practice, the discarded signals are simply removed before performing privacy amplification. We include them with the discard flag $\perp$ only to formulate the protocol in a trace-preserving manner.

This description of an EB protocol is general. When modelling the EB version of a P\&M protocol, some steps can be made more specific. Suppose Alice prepares signal states $\ket{\psi_i}$ with probability $p(i)$. In the source-replacement scheme, this is modelled as Alice preparing the state $\tau_{AA'}=\sum_{ij} \sqrt{p(i)p(j)}\dyad{i}{j}_A\otimes\dyad{\psi_i}{\psi_j}_{A'}$, sending system $A'$ to Bob, and measuring with the POVM $\{\dyad{i}_A\}$. As Eve cannot access Alice's lab, the reduced density matrix is known. Thus, for P\&M protocols there is an additional constraint in parameter estimation, namely $\rho_A=\reduced=\sum_{ij} \sqrt{p(i) p(j)} \braket{\psi_j}{\psi_i}\dyad{i}{j}$. Note that $\tau_A$ is closely related to the Gram matrix of signal states.

\subsection{Post-processing Channel}\label{postprocesschannel}
In order to evaluate the key rate, we formally define the post-processing steps as a quantum channel. We introduce registers $\tilde{A}$ and $\tilde{B}$, which hold the public announcements, $\overline{A}$ and $\overline{B}$, which hold the private measurement data, and $Z$, which holds the result of the key map. Without loss of generality, Eve has access to the register $E$ purifying $\rho_{AB}$ and the public information. 

Alice and Bob's measurement can be described by a channel $\meas$ that is simply given by $\Tr_{AB}(\nondest)$, where
\begin{equation}
\begin{split}
\label{nondestdefn}
\nondest(\rho_{ABE})=\sum_{i, j}  \dyad{a_i}_{\tilde{A}} \otimes \dyad{\alpha_i}_{\overline{A}}  \otimes  \dyad{b_j}_{\tilde{B}}\\ \otimes \dyad{\beta_j}_{\overline{B}} \otimes  \left[\left(\sqrt{P_A^i} \otimes \sqrt{P_B^j}\right) \rho_{ABE} \left(\sqrt{P_A^i} \otimes \sqrt{P_B^j}\right)\right].
\end{split}
\end{equation}
The action of the key map can be represented by the isometry
\begin{equation}
\begin{split}
V=\sum_{\sapub, \sbpub, \sbpriv} \ket{g(a,b,\beta)}_Z \otimes \dyad{a}_{\tilde{A}} \otimes \dyad{b}_{\tilde{B}} \otimes  \dyad{\beta}_{\overline{B}}.
\end{split}
\end{equation}
The final state between all parties is then $V\meas\left(\rho_{ABE}\right)V^\dagger$. We can pull the partial trace in the measurement channel $\meas$ through the isometry $V$. Then, the final classical-quantum state between the register holding the result of the key map and Eve is
\begin{align}
\sigma_{Z[E]}&=\Tr_{A\overline{A}B\overline{B}}( V \nondest(\rho_{ABE}) V^\dagger),\\
\label{postmap}
&\equiv\Phi(\rho_{ABE}).
\end{align}
The channel $\Phi$ characterizes the post-processing steps and $[E]$ denotes the composite register $E\tilde{A}\tilde{B}$. Note that discarded signals will not contribute to the key rate, as $\sigma_{Z[E]}$ is block-diagonal in the classical announcements. To compute $\Phi$, it is thus simpler and equivalent to not apply the POVM elements leading to discarded outcomes, rather than use a discard symbol \cite{Lin2019}. In this case, the postprocessing map is completely positive and trace non-increasing.
\subsection{Asymptotic Key Rate Formula}
\label{qkdapp}
The secure key rate of a QKD protocol, in the asymptotic limit and assuming independent and identically distributed (IID) rounds (collective attack), is given by the Devetak-Winter formula \cite{Devetak2005}. One can then aim to lift this to the key rate under coherent attacks, for example, using a quantum de Finetti theorem \cite{Renner2009}, and ultimately to a fully-composable security proof incorporating finite-size effects. Under collective attacks, for a given $\rho_{AB}$, the asymptotic key rate is
\begin{equation}
R^\infty=I(Z:X)-\chi(Z|[E]),
\end{equation}
where $I$ denotes the mutual information and $\chi$ denotes the Holevo information. The quantities are evaluated on the post-processed state after a single round, and $X$ refers to the party who does not perform the key map. In general, the state $\rho_{AB}$ is unknown. However, it is constrained by Alice and Bob through testing. The Devetak-Winter formula should be evaluated on the worst-case state compatible with these constraints. 

Determining the key rate of a protocol can then be reformulated as a convex optimization problem \cite{Coles2016}. The Devetak-Winter formula can be rearranged as, 
\begin{align}
R^\infty&=H(Z|[E])-H(Z)+I(Z:X),\\
\label{keyrateformula}
&= H(Z|[E]) -H(Z|X).
\end{align}
Only the first term needs to be optimized over. The second term is replaced by the actual error-correction cost $\delta^{leak}_{EC}$ which is the number of bits leaked per round. For realistic error correction, $\delta^{leak}_{EC}$ will be larger than the Shannon limit $H(Z|X)$. The key rate is thus
\begin{equation}
\label{convexkeyrate}
R^\infty=\min_{\rho_{AB} \in \mathbf{S}^{QKD}} [\fqkd(\rho_{AB}) ]-\delta^{leak}_{EC},
\end{equation}
where the objective function is the conditional entropy evaluated on the purified state after post-processing 
\begin{equation}
\label{objfunc}
\fqkd(\rho_{AB})=H(Z|[E])_{\Phi(\rho_{ABE})},
\end{equation}
and the set $\mathbf{S}^{QKD}$ is defined by the parameter estimation Alice and Bob perform, as well as the reduced density matrix constraint for P\&M protocols \cite{Winick2018}. That is,
\begin{equation}
\begin{split}
\label{gammaset}
\mathbf{S}^{QKD}=\{\rho\in \pos{\mathcal{H}_{AB}}:&\Tr(\rho)=1,\\
&\Tr_B(\rho)=\reduced, \\
&\Tr(\rho\Gamma_i)=\gamma_i \}.
\end{split}
\end{equation}
As both the objective function and feasible set are convex \cite{Winick2018}, Eq. \eqref{convexkeyrate} is a convex optimization.

When the Hilbert space $\mathcal{H}_{AB}$ is finite-dimensional, this problem can be reliably solved numerically \cite{Coles2016,Winick2018}. However, when the Hilbert space is infinite-dimensional, it is clearly not possible to directly solve this optimization numerically. We will develop a general method that, for all QKD protocols, allows us to compute tight lower bounds on the asymptotic key rate by relating the infinite-dimensional optimization to a finite-dimensional one. We can then numerically solve the finite-dimensional problem using the methods of \cite{Winick2018} to get tight lower bounds on the key rate for protocols where the state lives in an infinite-dimensional Hilbert space.

\section{General Framework \label{framework}}
In this section, we develop a general method to lower bound infinite-dimensional optimizations of the sort seen in the previous section. In fact, we present our results in more generality than used for finding QKD key rates. Our general method may be applied to other quantum information scenarios, such as entanglement verification \cite{Killoran2011}. 

Our notational conventions are as follows. For a Hilbert space $\HH$, let $\normdensityH$ ($\densityH$) be the set of (sub)normalized density operators on $\HH$. That is, $\densityH=\pos{\mathcal{H}} \cap \mathcal{T}_1$, where $\pos{\mathcal{H}}$ is the set of positive semidefinite operators on $\HH$, and $\mathcal{T}_1$ is the set of trace-class operators with trace no greater than 1. We use tildes to denote operators that are subnormalized.

\subsection{Problem Setup and Definitions}
We begin with some definitions. Let $\Hinf$ be a separable Hilbert space, which may be infinite-dimensional. Let $\sinf$ be a convex subset of $\densityH[\Hinf]$. Finally, let $f$ be a convex function from $\densityH[\Hinf]$ to $\mathbbm{R}$. The infinite-dimensional optimization we will consider is
\begin{equation}
\label{inf}
\min_{\rhot \in\sinf} f(\rhot).
\end{equation}
We assume that there exists a feasible operator $\rhoo$ achieving the optimum. This assumption holds in the practical case where the objective function $f$ is continuous and $\sinf$ is compact. Our goal then is to find a lower bound on $f(\rhoo)$. Our strategy to do this is to relate the infinite-dimensional optimization to a suitably chosen finite-dimensional optimization, which can then be solved numerically. Note that in the process we choose certain mathematical objects freely; how to choose them effectively is the focus of the following sections. 

Choose $\Hfin$ to be any finite-dimensional subspace of $\Hinf$. $\Pi$ is defined as the projector onto this subspace, and $\bar{\Pi}\equiv\mathbbm{1}-\Pi$. Choose $\sfin$ to be a nonempty convex subset of $\densityH[\Hfin]$ satisfying
\begin{equation}
\label{containment}
\Pi \sinf \Pi \subseteq \sfin.
\end{equation}
It is always possible to choose such a set $\sfin$, as $ \Pi \densityH[\Hinf] \Pi \subseteq \densityH[\Hfin]$. We can now define the finite-dimensional optimization
\begin{equation}
\label{fin}
\min_{\rhot \in\sfin} f(\rhot).
\end{equation}
We assume there exists a feasible operator $\nrho$ achieving this optimum. Again, this assumption holds in the practical case when $f$ is continuous and $\sfin$ is compact.

\subsection{Main Theorem}
At a high level, our proof method is illustrated in Fig. \ref{thmfig}. In order to relate the objective function $f$ evaluated at the two optimal operators, $\rhoo$ and $\nrho$, we introduce an auxiliary variable, $\rhon\equiv\Pi \rhoo \Pi$. This variable is the projection of the infinite-dimensional optimum $\rhoo$ onto the chosen finite subspace. We relate the optima to this auxiliary variable separately, and then to each other. For positive operators $P$ and $Q$, our convention for the fidelity function is $F(P,Q)=\Tr(\sqrt{\sqrt{Q} P \sqrt{Q} })$.
\begin{figure}
\includegraphics[scale=0.6]{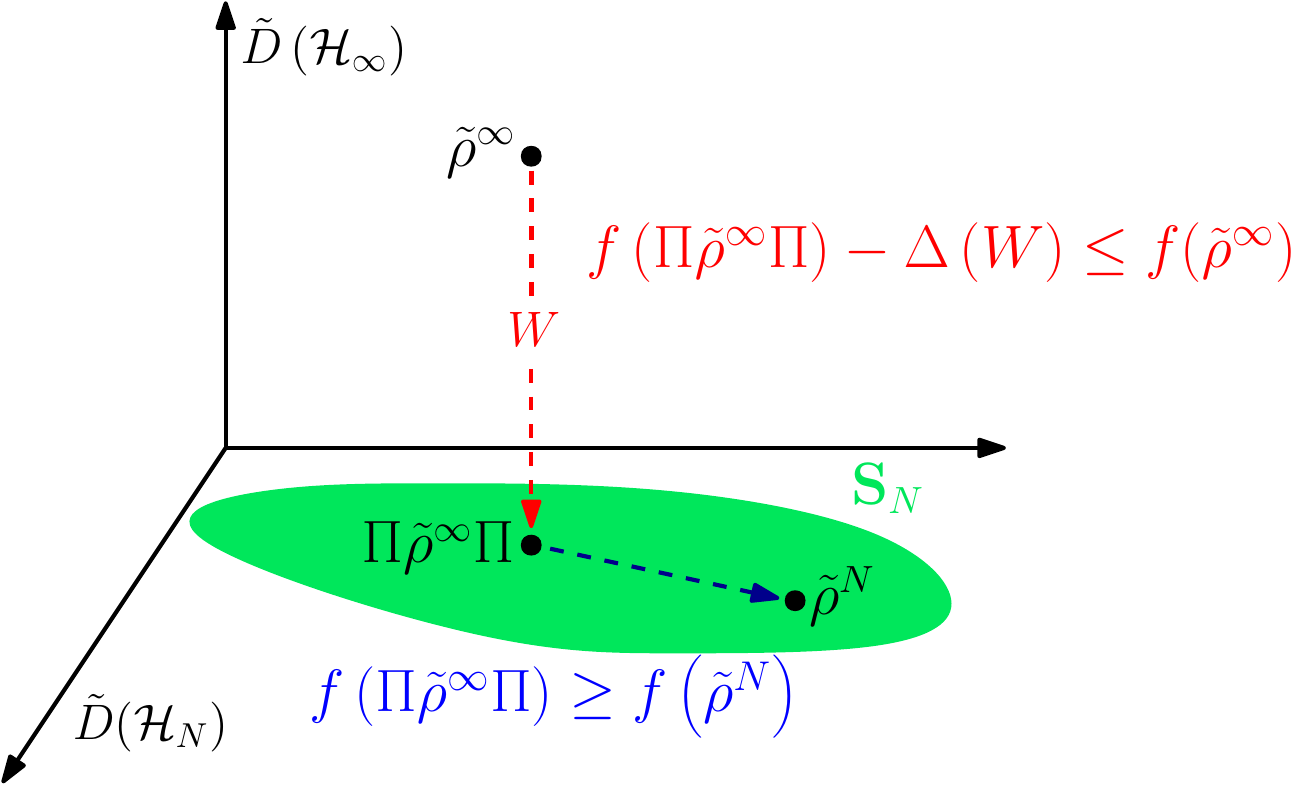}
\caption{Pictorial representation of Theorem \eqref{theorem_general}. The set $\sfin$ is chosen to contain the projection $\Pi\rhoo\Pi$. This auxiliary variable is used to relate $f(\rhoo)$ and $f(\nrho)$ }
\label{thmfig}
\end{figure}

\textbf{Inequality 1, Finite Set:} We first relate $f(\nrho)$ and $f(\rhon)$. By definition, $\rhon \in \Pi \sinf \Pi$. By the containment property introduced in Eq. \eqref{containment}, it follows that $\rhon \in \sfin$. Thus, $\rhon$ is feasible for the minimization in Eq. \eqref{fin}. Since $\nrho$ achieves the minimum, it follows that
\begin{equation}
	\label{finiteset}
	f(\nrho)\leq f(\rhon).
\end{equation}

\textbf{Inequality 2, Projection:} We next relate $f(\rhoo)$ and $f(\rhon)$. To do this, we introduce a property of $f$ relating the fidelity of input states to changes in the function. Define $f$ as being \emph{uniformly close to decreasing under projection (UCDUP) on $\mathbf{S}\subseteq \densityH[\Hinf]$ with correction term $\Delta$} if, for every $\sigmat\in\mathbf{S}$, it satisfies
\begin{equation}
\begin{split}
\label{ucdupdefn}
F(\sigmat,\Pi\sigmat\Pi) \geq \Tr(\sigmat)-W \implies f(\Pi\sigmat\Pi) - f(\sigmat) \leq \Delta(W),
\end{split}
\end{equation}
where $\Delta$ is a nonnegative, increasing function satisfying $\Delta(0)=0$. This property has some similarities to uniform-continuity with respect to trace distance. In fact, it is implied by the latter. However, because being UCDUP is a weaker condition, we may be able to find smaller $\Delta$ than implied by a uniform-continuity bound (see Sec. \ref{ucdup}). In some cases, it may also be possible to set $\Delta=0$. This is the case for bounding the QKD objective function when the key map POVM elements commute with $\Pi$ (see Sec. \ref{dup}). It is also the case for some entanglement measures \cite{Killoran2011}.

We can re-express the condition on $W$ in a more useful way. By Winter's Gentle Measurement Lemma \cite{Winter1999}, which holds with equality when considering projections, the fidelity can be written as
\begin{equation}
F(\sigmat,\Pi\sigmat\Pi)=\Tr(\sigmat\Pi).
\end{equation}
Rearranging, it follows that
\begin{equation}
F(\sigmat,\Pi\sigmat\Pi) \geq \Tr(\sigmat)-W \iff \Tr(\sigmat \bar{\Pi})\leq W.
\end{equation}
Thus, when $f$ is UCDUP on $\sinf$ and $\Tr(\rhoo \bar{\Pi})\leq W$,
\begin{equation}
\label{projection_inequc}
f(\rhon)-\Delta(W)\leq f(\rhoo).
\end{equation}

We now have all the pieces in place to prove our main theorem.
\begin{thm}[Relating Finite- and Infinite-Dimensional Optimizations]
\label{theorem_general}
If $f$ is UCDUP on $\sinf$ with correction term $\Delta$ (Eq. \eqref{ucdupdefn}) and $\Tr(\rhoo\bar{\Pi})\leq W$, then 
\begin{equation}
f(\nrho) -\Delta(W)\leq f(\rhoo).
\end{equation}
\end{thm}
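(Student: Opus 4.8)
The plan is to simply chain together the two inequalities that were already established immediately before the theorem statement, using the auxiliary variable $\rhon = \Pi\rhoo\Pi$ as the bridge. Everything needed has been set up: Inequality 1 (Eq.~\eqref{finiteset}) gives $f(\nrho) \leq f(\rhon)$, and Inequality 2 (Eq.~\eqref{projection_inequc}) gives $f(\rhon) - \Delta(W) \leq f(\rhoo)$, so I would combine them.

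Concretely, first I would recall that $\rhon \equiv \Pi\rhoo\Pi$ is well-defined since $\rhoo$ is assumed to achieve the infinite-dimensional optimum and hence lies in $\sinf$. Then I would restate the containment argument: $\rhon \in \Pi\sinf\Pi \subseteq \sfin$ by Eq.~\eqref{containment}, so $\rhon$ is feasible for the finite-dimensional problem~\eqref{fin}, whence optimality of $\nrho$ yields $f(\nrho) \leq f(\rhon)$. Next I would invoke the hypothesis that $f$ is UCDUP on $\sinf$ with correction term $\Delta$: applying the defining implication~\eqref{ucdupdefn} to $\sigmat = \rhoo$, together with the reformulation $F(\rhoo,\Pi\rhoo\Pi) \geq \Tr(\rhoo) - W \iff \Tr(\rhoo\bar{\Pi}) \leq W$ (which follows from the Gentle Measurement Lemma equality $F(\sigmat,\Pi\sigmat\Pi) = \Tr(\sigmat\Pi)$), the assumption $\Tr(\rhoo\bar{\Pi}) \leq W$ forces $f(\Pi\rhoo\Pi) - f(\rhoo) = f(\rhon) - f(\rhoo) \leq \Delta(W)$, i.e. $f(\rhon) - \Delta(W) \leq f(\rhoo)$.

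Finally I would add the two resulting inequalities: $f(\nrho) \leq f(\rhon)$ and $f(\rhon) \leq f(\rhoo) + \Delta(W)$ combine to give $f(\nrho) \leq f(\rhoo) + \Delta(W)$, which rearranges to the claimed bound $f(\nrho) - \Delta(W) \leq f(\rhoo)$. That completes the proof.

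Honestly, there is no real obstacle here — the theorem is essentially a bookkeeping step that packages the two lemma-level inequalities derived in the preceding paragraphs. If anything, the only subtlety worth a sentence is making sure the UCDUP hypothesis is applied at the right point (namely to $\rhoo$ itself, which is legitimate because $\rhoo \in \sinf$), and noting that monotonicity of $\Delta$ is not needed for this particular deduction, only its nonnegativity (so that the bound is nontrivial) — the increasing property of $\Delta$ matters for the later sections where $W$ is itself bounded, not for the theorem as stated. The genuinely hard work, which the theorem defers, is establishing good values of $\Delta$ and $W$ for the QKD objective function; that is handled in Sections~\ref{ucdup}--\ref{dup} and the appendices.
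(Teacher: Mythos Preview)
Your proposal is correct and matches the paper's own proof essentially verbatim: chain Eq.~\eqref{finiteset} with Eq.~\eqref{projection_inequc} via the auxiliary point $\rhon=\Pi\rhoo\Pi$. The additional commentary you provide (on where UCDUP is applied and on the role of monotonicity of $\Delta$) is accurate but not needed for the argument itself.
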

\begin{proof}
This theorem follows from chaining the inequality in Eq. \eqref{finiteset}, which relates $f(\nrho)$ and $f(\rhon)$, with the one in Eq. \eqref{projection_inequc}, which relates $f(\rhoo)$ and $f(\rhon)$.
\end{proof}

In order to apply this theorem, in addition to choosing $\Hfin$ and $\sfin$, we need to determine an expression for the correction term $\Delta$ and a suitable value for $W$ (see Sec. {\ref{frameworkapp}}). For the latter quantity, as $\rhoo$ is unknown, we choose the smallest value of $W$ satisfying
\begin{equation}
\label{Wsdp}
W\geq\max_{\rhot\in\sinf} \Tr(\rhot \bar{\Pi}).
\end{equation}
We thus see that the determination of $W$ itself involves an infinite-dimensional optimization. In practice, this optimization tends to be considerably easier to solve than the original one (Eq. {\eqref{inf}}). In particular, when $\sinf$ is the feasible set of a semidefinite program (SDP), known as a spectrahedron, then any solution to the dual problem provides an upper bound. For the QKD protocols we study, we can obtain tight values of $W$ by analytically solving the dual or a relaxed version of the primal problem (see Sec. {\ref{qkdweight}}). 

Note that $W$ will not only be used to determine the correction term $\Delta(W)$, but also to parametrize the set $\sfin$ we choose (see Sec. \ref{finitesetchoice}).

\section{Application of Framework to QKD}
\label{frameworkapp}
Having developed a general method to lower-bound convex optimizations, we will now apply it to asymptotic QKD key rate optimizations. Our initial infinite-dimensional optimization is as described in Sec. \ref{qkdapp}, so that $\Hinf=\mathcal{H}_{AB}$, $f=\fqkd$, and $\sinf=\mathbf{S}^{QKD}$. Unless otherwise stated, we will focus on the case where $\mathcal{H}_A$ is finite-dimensional. In order to apply Theorem \eqref{theorem_general}, we need four inputs: the subspace $\Hfin$ onto which $\Pi$ projects, the bound on weight outside the subspace $W$, the correction term $\Delta$, and the finite set $\sfin$. The first two of these depend on the specific QKD protocol, so we only give some general remarks on them in this section. The second two are generic for all protocols, so we derive them in detail.

\subsection{Choose Subspace \texorpdfstring{$\mathcal{H}_N$} {}}
This step is protocol-specific, and the choice of finite subspace $\Hfin$ can be made freely. However, for a fixed finite dimension, some subspaces will give a better lower bound than others. Intuitively, we want to choose a subspace which contains most of the weight of $\rhoo$. Since this state is unknown, a good heuristic is to choose the subspace containing the most weight of the expected state under a representative channel model (see Sec. \ref{dmcvqkdfiniteset}). This is conceptually similar to assuming a particular channel for the purpose of designing a QKD protocol or error-correcting code. Another consideration is to choose the projection to commute with the objective function POVM elements (see Sec. \ref{correctiontermderivation}) or the constraint operators (see Sec. \ref{finitesetchoice}). The choice of the dimension of the finite subspace will be influenced by the run time of the numerical method used to solve the finite-dimensional optimization.

\subsection{Bound Weight \texorpdfstring{$W$}{} Outside Subspace}\label{qkdweight} 
This step is also protocol-specific. Note that for $\sinf$ as given in Eq. \eqref{gammaset}, the bound on $W$ shown in Eq. \eqref{Wsdp} is a semidefinite program. The range of approaches to calculate $W$ is fairly wide, so in place of general strategies, we instead survey some examples. For DMCVQKD, we analytically find a solution to the dual SDP (see Sec. \ref{dmcvqkdweight}). For unbalanced phase-encoded BB84, the monotonicity of cross-clicks or double-clicks with increasing photon number, along with Markov's inequality, is used to bound $W$ \cite{Li2020}.

\subsection{Determine Correction Term \texorpdfstring{$\Delta$}{} }\label{correctiontermderivation}
Recall our objective function $f$ is given in Eq. \eqref{objfunc}. We will show that it is UCDUP and determine the correction term $\Delta$ as a function of $W$. We will find a general correction term which does not depend on the details of the postprocessing map (see Eq. \eqref{postmap}) and is thus applicable to all protocols. We will also show that when the postprocessing map satisfies a certain property, which holds for some protocols, we can omit the correction term entirely, i.e. $\Delta=0$. 
\subsubsection{General Case}\label{ucdup}
In this case, we do not make any assumptions about the post-processing channel $\Phi$ in Eq. \eqref{postmap}, other than the fact that it is indeed a channel. The crux of the proof is in the following lemma.
\begin{lemma}
\label{ucduplemma}
Let $\mathcal{H}_A$ and $\mathcal{H}_B$ be two Hilbert spaces, where the dimension of $\mathcal{H}_A$ is $\abs{A}$ while $\mathcal{H}_B$ can be infinite-dimensional. Let $\rhot_{AB}, \sigmat_{AB}\in\tilde{D}( \mathcal{H}_A \otimes  \mathcal{H}_B)$ be two subnormalized, classical-quantum states with $\Tr(\rhot_{AB})\geq\Tr(\sigmat_{AB})$. If $\frac{1}{2}\norm{\rhot_{AB}-\sigmat_{AB}}_1\leq\epsilon$, then
\begin{equation}
H(A|B)_{\sigmat_{AB}} - H(A|B)_{\rhot_{AB}} \leq  \e \log_2\abs{A} +(1+\e)h\left(\frac{\e}{1+\e}\right),
\end{equation}
where $h(x)$ is the binary entropy function.
\end{lemma}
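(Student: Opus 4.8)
The plan is to prove this as a continuity-type bound for the conditional entropy $H(A|B)$, specialized to the case where $A$ is a classical register of dimension $\abs{A}$ and the states are subnormalized. The natural tool is the Alicki–Fannes–Winter (AFW) continuity bound for conditional entropy, but in a form adapted to (i) subnormalized states and (ii) the classical structure of $A$, which should eliminate the usual factor of $2$ in front of $\log\abs{A}$. First I would reduce to normalized states: since $\Tr(\rhot_{AB})\geq\Tr(\sigmat_{AB})$ and the trace distance is bounded by $\e$, I can pad both states into a slightly larger system with a flag qubit recording "normalized vs. deficit," producing normalized states $\rho_{AB}', \sigma_{AB}'$ on $\mathcal{H}_A\otimes(\mathcal{H}_B\otimes\mathcal{H}_{\text{flag}})$ whose trace distance is still controlled by $\e$ (one must track how the renormalization and the added weight $1-\Tr(\sigmat_{AB})\leq 2\e$ or so feed into the distance; this bookkeeping is the kind of routine calculation I would not grind through here). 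Conditional entropy is unaffected by appending a system to $B$ that is in a fixed state, so $H(A|B)_{\sigmat}=H(A|B\,\text{flag})_{\sigma'}$ up to the subnormalization correction, and similarly for $\rhot$.

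The core step is then the classical-$A$ AFW bound on normalized states: if $\rho', \sigma'$ are cq-states on $A\otimes B$ with $\tfrac12\norm{\rho'-\sigma'}_1\leq\e'$, then $\abs{H(A|B)_{\rho'}-H(A|B)_{\sigma'}}\leq \e'\log_2\abs{A}+(1+\e')\,h\!\left(\tfrac{\e'}{1+\e'}\right)$. The mechanism is the standard one: write $H(A|B)=H(AB)-H(B)$, introduce the optimal coupling between $\rho'$ and $\sigma'$ and a "which-is-larger" state $\omega'$ with $\rho'-\sigma'\leq 2\e'\,\omega_+$, and use subadditivity together with the fact that, because $A$ is classical, conditioning on $A$ in the relevant decompositions contributes only $\log\abs{A}$ rather than $2\log\abs{A}$; the $h$-term is the entropy of the binary mixing parameter. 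I would either cite Winter's strengthened continuity bound directly in its cq-form or reproduce the short argument. Care is needed that the $\e'$ appearing here, after the subnormalization reduction, can be taken equal to $\e$ (not a larger constant), which is exactly why the padding in step one must be done efficiently rather than crudely.

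Finally I would translate back: the added flag system contributes nothing to $H(A|B)$ beyond the deficit in trace, and the renormalization shifts entropies by at most terms of order $(1-\Tr)\log(\cdot)$ which must be absorbed into the stated $\e\log_2\abs{A}+(1+\e)h(\tfrac{\e}{1+\e})$ expression. I expect the main obstacle to be precisely this subnormalized-to-normalized reduction done tightly: a naive padding inflates $\e$ or introduces an extra $\log$ term, and getting the clean one-sided bound with coefficient exactly $\log_2\abs{A}$ (using the hypothesis $\Tr(\rhot_{AB})\geq\Tr(\sigmat_{AB})$ to control the direction of the inequality) is the delicate point; the classical-$A$ AFW estimate itself is standard once the states are normalized.
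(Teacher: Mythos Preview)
Your reduction-to-normalized-states strategy is natural, but it has a genuine gap precisely at the point you flag as ``delicate.'' Any padding of $\rhot$ and $\sigmat$ to normalized states $\rho',\sigma'$ on an enlarged $B$-system (with a flag recording the deficit) necessarily inflates the trace distance: if $2\delta=\Tr\rhot-\Tr\sigmat$, then the flag blocks carry weights $1-\Tr\rhot$ and $1-\Tr\sigmat$, which differ by $2\delta$, so $\tfrac12\|\rho'-\sigma'\|_1\geq \e+\delta$. Applying the cq-AFW bound to $\rho',\sigma'$ therefore yields a coefficient $(\e+\delta)\log_2\abs{A}$, not $\e\log_2\abs{A}$. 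Since $\delta$ can be as large as $\e$, this is off by up to a factor of two and does not prove the lemma as stated. There is no way to pad more efficiently: any completions to trace one must differ in the added weight by exactly $2\delta$, which contributes at least $\delta$ to the trace distance.

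The paper avoids this by \emph{not} reducing to normalized states. It reruns Winter's intermediate-state argument directly on the subnormalized $\rhot,\sigmat$: it splits $\rhot-\sigmat$ into positive and negative parts with traces $\ee=\e+\delta$ and $\eee=\e-\delta$, forms the mixture $\omega$ from both decompositions, and bounds $H(A|B)_\omega$ above and below via concavity and the relative-entropy characterization of conditional entropy. The key payoff is that the one-sided bound $H(A|B)_{\sigmat}-H(A|B)_{\rhot}$ is controlled by the \emph{negative} part, giving coefficient $\eee=\e-\delta\leq\e$. The hypothesis $\Tr\rhot\geq\Tr\sigmat$ is exactly what makes $\eee$ the relevant quantity in this direction. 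So the tight constant comes from tracking $\e\pm\delta$ separately inside the proof, which a black-box reduction to the normalized case cannot do.
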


\begin{proof}
This result is a generalization of Lemma 2 in \cite{Winter2016}, which was derived for normalized states. The proof of the extension to subnormalized states is given in Appendix \ref{ctybd}.
\end{proof}
With this lemma in hand, we will prove that $f$ is UCDUP on $\normdensityH[\Hinf]$ in the following theorem. Note this is sufficient as for any QKD protocol, $\sinf \subseteq \normdensityH[\Hinf]$.
\begin{thm}
\label{ucdupthm}
The QKD objective function $f$ is UCDUP on $\normdensityH[\Hinf]$ with correction term
\begin{equation}
\begin{split}
\label{qkducdup}
\Delta(W)= \sqrt{2W-W^2} \log_2 \abs{Z} \\+  \left(1+\sqrt{2W-W^2}\right) h\left(\frac{\sqrt{2W-W^2}}{1+\sqrt{2W-W^2}}\right),
\end{split}
\end{equation} 
where $\abs{Z}$ is the dimension of the key map register.
\end{thm}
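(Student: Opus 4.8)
The plan is to verify the UCDUP implication of Eq.~\eqref{ucdupdefn} for $f=\fqkd$ on $\normdensityH[\Hinf]$ directly: fix a normalized $\sigmat_{AB}$ with $F(\sigmat_{AB},\Pi\sigmat_{AB}\Pi)\ge\Tr(\sigmat_{AB})-W$ and bound $\fqkd(\Pi\sigmat_{AB}\Pi)-\fqkd(\sigmat_{AB})$. First I would rewrite the hypothesis: by the equality case of the gentle measurement lemma already used in the text, $F(\sigmat_{AB},\Pi\sigmat_{AB}\Pi)=\Tr(\sigmat_{AB}\Pi)$, so the hypothesis is simply $q:=\Tr(\sigmat_{AB}\Pi)\ge 1-W$. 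Next I would pass to purifications. Recall $\fqkd(\rho_{AB})=H(Z|[E])_{\Phi(\rho_{ABE})}$ for a purification $\rho_{ABE}$, and this value does not depend on which purification is chosen (any two differ by an isometry on $E$, the channel $\Phi$ acts trivially on $E$, and conditional entropy is invariant under isometries on the conditioning register). So pick a purification $\psi_{ABE}$ of $\sigmat_{AB}$ and extend $\Pi$ to $\Pi\otimes\mathbbm{1}_E$; then $\Pi\psi_{ABE}\Pi$ is a subnormalized purification of $\Pi\sigmat_{AB}\Pi$, hence $\fqkd(\Pi\sigmat_{AB}\Pi)=H(Z|[E])_{\Phi(\Pi\psi_{ABE}\Pi)}$. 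Since $\psi_{ABE}$ is pure with $F(\psi_{ABE},\Pi\psi_{ABE}\Pi)=\bra{\psi}\Pi\ket{\psi}=q$, the Fuchs--van de Graaf inequality (in the form valid for subnormalized operators) gives
\begin{equation}
\tfrac12\norm{\psi_{ABE}-\Pi\psi_{ABE}\Pi}_1\le\sqrt{1-q^{2}}\le\sqrt{2W-W^{2}}=:\epsilon,
\end{equation}
where the last step uses $1-q\le W\le 1$ and that $s\mapsto 2s-s^{2}$ increases on $[0,1]$.

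Now I would push this bound through the post-processing map $\Phi$ of Eq.~\eqref{postmap}. Setting $\omega:=\Phi(\psi_{ABE})$ and $\omega':=\Phi(\Pi\psi_{ABE}\Pi)$, monotonicity of the trace distance under a channel gives $\tfrac12\norm{\omega-\omega'}_1\le\epsilon$; moreover $\omega,\omega'$ are subnormalized classical--quantum states on $\mathcal{H}_Z\otimes\mathcal{H}_{[E]}$ with $Z$ a classical register of dimension $\abs{Z}$ and $\mathcal{H}_{[E]}$ possibly infinite-dimensional, and $\fqkd(\sigmat_{AB})=H(Z|[E])_\omega$, $\fqkd(\Pi\sigmat_{AB}\Pi)=H(Z|[E])_{\omega'}$. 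Because $\Phi$ is trace preserving, $\Tr(\omega)=\Tr(\psi_{ABE})=1\ge q=\Tr(\Pi\psi_{ABE}\Pi)=\Tr(\omega')$. Applying Lemma~\ref{ucduplemma} with $\mathcal{H}_A=\mathcal{H}_Z$, $\mathcal{H}_B=\mathcal{H}_{[E]}$, with $\omega$ in the role of the larger-trace operator and $\omega'$ in the role of the smaller-trace operator, then yields
\begin{equation}
\fqkd(\Pi\sigmat_{AB}\Pi)-\fqkd(\sigmat_{AB})\le\epsilon\log_2\abs{Z}+(1+\epsilon)\,h\!\left(\frac{\epsilon}{1+\epsilon}\right),
\end{equation}
which, substituting $\epsilon=\sqrt{2W-W^{2}}$, is exactly $\Delta(W)$ of Eq.~\eqref{qkducdup}. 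To finish I would check $\Delta$ is an admissible correction term: $\Delta(0)=0$ (there $\epsilon=0$), $\Delta\ge0$, and $\Delta$ is increasing in $W$ on $[0,1]$ since $\epsilon$ is and $\epsilon\mapsto\epsilon\log_2\abs{Z}+(1+\epsilon)h(\epsilon/(1+\epsilon))$ is increasing on $[0,1]$. Finally, since $\mathbf{S}^{QKD}\subseteq\normdensityH[\Hinf]$ for every QKD protocol, being UCDUP on $\normdensityH[\Hinf]$ implies being UCDUP on $\sinf$.

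The step I expect to be the real obstacle is securing the trace ordering $\Tr(\omega)\ge\Tr(\omega')$ required by Lemma~\ref{ucduplemma}: this is precisely why one must treat the post-processing as a genuine trace-preserving channel (the discard-flag formulation of Sec.~\ref{postprocesschannel}), so that $\Tr(\omega)=1$. For the computationally convenient trace-non-increasing variant in which the discarded POVM elements are simply dropped, $\Tr\Phi(\Pi\sigmat_{AB}\Pi)$ need not be bounded by $\Tr\Phi(\sigmat_{AB})$, and the one-sided continuity estimate of Lemma~\ref{ucduplemma} would then not apply in the direction we need; using instead a symmetric two-sided continuity bound would cost a factor of two in the $\log_2\abs{Z}$ term and so would not give the stated $\Delta$. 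A secondary, more routine point is pinning down the exact constant $\epsilon=\sqrt{2W-W^{2}}$ from the fidelity hypothesis, which requires the fidelity/trace-distance relation valid for subnormalized operators rather than the textbook normalized version.
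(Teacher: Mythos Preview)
Your proof is correct and follows essentially the same route as the paper: purify, bound the fidelity by $1-W$, convert to a trace-distance bound $\sqrt{2W-W^{2}}$ via Fuchs--van de Graaf, and invoke Lemma~\ref{ucduplemma} using the trace-preserving formulation of $\Phi$. The only minor difference is that the paper appeals to Uhlmann's theorem to choose purifications achieving the fidelity of the reduced states, whereas you observe directly that $\Pi\psi_{ABE}\Pi$ already purifies $\Pi\sigmat_{AB}\Pi$ with the required fidelity (via the equality case of gentle measurement), which is a small but pleasant simplification.
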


\begin{proof}
As per the definition of UCDUP, let $\sigma_{AB}\in\normdensityH[\Hinf]$ be a state satisfying $F(\sigma_{AB},\Pi\sigma_{AB}\Pi) \geq 1-W$. Let $\sigma_{ABE}$ and $\sigmat^{\Pi}_{ABE}$ be purifications of $\sigma_{AB}$ and $\Pi\sigma_{AB}\Pi$ respectively (we consider a purification of an unnormalized state to not change the trace). By Uhlmann's theorem \cite{Uhlmann1976}, we can choose the purifications to satisfy
\begin{align}
F(\sigma_{ABE},\sigmat^{\Pi}_{ABE})&=F(\sigma_{AB},\Pi\sigma_{AB}\Pi)\\
&\geq 1 -W.
\end{align}
By the monotonicity of fidelity under channels, we have
\begin{equation}
F(\tau_{Z[E]},\taut^\Pi_{Z[E]})\geq 1-W
\end{equation}
where $\tau_{Z[E]}=\Phi(\sigma_{ABE})$ and $ \taut^\Pi_{Z[E]}=\Phi(\sigmat^{\Pi}_{ABE})$. The Fuchs-van de Graaf inequalities \cite{Fuchs99}, which are valid in infinite dimensions and for subnormalized states, relate fidelity and trace distance as follows, $1-F(\rhot,\sigmat)\leq\frac{1}{2}\norm{\rhot-\sigmat}_1\leq\sqrt{1-F(\rhot,\sigmat)^2}$. By the second of these,
\begin{equation}
\frac{1}{2}\norm{\tau_{Z[E]}-\taut^\Pi_{Z[E]}}_1 \leq \sqrt{2W-W^2}.
\end{equation}
Since $\Phi$ is trace-preserving, $\Tr(\tau_{Z[E]})\geq \Tr(\taut^\Pi_{Z[E]})$. Thus, we can apply Lemma \ref{ucduplemma} to obtain
\begin{align}
f(\Pi\sigma_{AB} \Pi) - f(\sigma_{AB}) &= H(Z|[E])_{\taut^\Pi_{Z[E]}} - H(Z|[E])_{\tau_{Z[E]}} \\
\begin{split}
&\leq  \sqrt{2W-W^2} \log_2 \abs{Z} \\
+  (1+\sqrt{2W-W^2}) &h\left(\frac{\sqrt{2W-W^2}}{1+\sqrt{2W-W^2}}\right).
\end{split}
\end{align}
This is precisely the condition we require for $f$ to be UCDUP, so we identify the right-hand side as $\Delta(W)$.
\end{proof}
For a key map with $k$ outcomes, the dimension of the key map register is $k$. For the purpose of this argument, the discard symbol $\perp$ does not count towards a key outcome. The reason for this is that instead of $\perp$, one could use any pre-existing key symbol to flag discarded signals. Since the classical-quantum state $\sigma_{Z[E]}$ between the key map register and Eve is block-diagonal in the classical announcements (Eq. \eqref{postmap}), Eve could identify the discarded signals from those public announcements alone. This would leave the value of the objective function unchanged. Indeed, the $\perp$ symbol is only used for clarity in our presentation.
\subsubsection{Special Case: Block-Diagonal Measurements}\label{dup}
If the key map POVM elements are block-diagonal with respect to $\Pi$ and $\bar{\Pi}$, the correction term is zero.
\begin{thm}
\label{nodelta}
Let $\Phi$ be defined by the POVMs $\{P_A^i\}$ and $\{P_B^j\}$ and a key map isometry $V$. If all POVM elements are block-diagonal, so that $[P_A^i\otimes P_B^j,\Pi]=0 \ \forall i,j$, then
\begin{equation}
\Delta(W)=0.
\end{equation}
\end{thm}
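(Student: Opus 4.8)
The plan is to show that when the key-map POVM elements commute with $\Pi$, the post-processing channel $\Phi$ acts ``blockwise'' in a way that makes $f(\Pi\sigma_{AB}\Pi) = f(\sigma_{AB})$ exactly, so that the UCDUP condition holds with $\Delta = 0$. The key structural observation is that $\sqrt{P_A^i \otimes P_B^j}$ commutes with $\Pi$ (a positive operator commutes with $\Pi$ iff its square root does), hence in $\nondest$ (Eq.~\eqref{nondestdefn}) we may write $(\sqrt{P_A^i}\otimes\sqrt{P_B^j})\Pi\sigma_{AB}\Pi(\sqrt{P_A^i}\otimes\sqrt{P_B^j}) = \Pi\,(\sqrt{P_A^i}\otimes\sqrt{P_B^j})\sigma_{AB}(\sqrt{P_A^i}\otimes\sqrt{P_B^j})\,\Pi$. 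Thus the measurement channel applied to $\Pi\sigma\Pi$ is just $\Pi$-conjugation composed with the measurement channel applied to $\sigma$, where $\Pi$ now acts only on the $AB$ factor of the output while leaving the classical registers $\tilde A,\overline A,\tilde B,\overline B$ untouched.

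Concretely, I would first lift everything to the purified picture used in Theorem~\ref{ucdupthm}: take $\sigma_{ABE}$ a purification of $\sigma_{AB}$ and note that $\Pi\sigma_{AB}\Pi$ is purified by $(\Pi\otimes\mathbbm{1}_E)\sigma_{ABE}(\Pi\otimes\mathbbm{1}_E)$ (again using that projecting a reduced state corresponds to projecting a purification, consistent with the convention in the excerpt that purifying an unnormalized state preserves the trace). Then I would track $\Pi$ through the full post-processing map $\Phi$ of Eq.~\eqref{postmap}: since $\Pi$ commutes with all $\sqrt{P_A^i\otimes P_B^j}$, and the isometry $V$ and the announcement-writing part of $\nondest$ act only on the classical registers $Z,\tilde A,\tilde B,\overline B$, the operator $\Pi$ (acting on $AB$, or on its Naimark-dilated copy inside $E$) commutes with everything applied after the measurement. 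The partial trace $\Tr_{A\overline A B\overline B}$ is then taken, and because $\Pi$ acts nontrivially only on the traced-out $AB$ subsystem while being a projector, one gets the exact identity
\begin{equation}
\Phi\bigl((\Pi\otimes\mathbbm{1})\sigma_{ABE}(\Pi\otimes\mathbbm{1})\bigr) = \Phi(\sigma_{ABE}),
\end{equation}
as classical-quantum states on $Z[E]$ — or more carefully, one argues that the two output states have the same conditional entropy $H(Z|[E])$. The cleanest route is probably: $\Phi(\sigma_{ABE})$ can be written as a channel $\mathcal{N}$ applied to $\sigma_{ABE}$ where $\mathcal{N}$ factors through first measuring and then discarding $A\overline A B\overline B$; since $\Pi$ commutes with the measurement Kraus operators and is supported on the discarded system, the discarding step erases it, giving $\mathcal{N}\circ(\Pi\cdot\Pi) = \mathcal{N}$ as channels on states supported where relevant. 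Hence $f(\Pi\sigma_{AB}\Pi) = H(Z|[E])_{\Phi((\Pi\otimes\mathbbm{1})\sigma_{ABE}(\Pi\otimes\mathbbm{1}))} = H(Z|[E])_{\Phi(\sigma_{ABE})} = f(\sigma_{AB})$, so in particular $f(\Pi\sigma_{AB}\Pi) - f(\sigma_{AB}) = 0 \le \Delta(W)$ for the choice $\Delta \equiv 0$, which is nonnegative, increasing, and vanishes at $0$ as required by Eq.~\eqref{ucdupdefn}.

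The main obstacle is making the ``$\Pi$ disappears under the partial trace'' step rigorous, especially since $\mathcal{H}_B$ is infinite-dimensional and the states are subnormalized: one must be careful that $\Pi$ acts on $AB$ (or the reference) rather than on the retained register $Z[E]$, that the purification bookkeeping is consistent, and that traces behave well in the infinite-dimensional trace-class setting (all operators involved are trace-class since $\sigma_{AB}$ is, and $\Pi$ is bounded, so no convergence pathologies arise). A secondary subtlety is the reduced-state constraint: $\Pi\sigma_{AB}\Pi$ need not satisfy $\Tr_B(\cdot)=\tau_A$, but this is irrelevant here because Theorem~\ref{nodelta} only concerns the UCDUP property of $f$ on $\normdensityH[\Hinf]$, not feasibility in $\sinf$. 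One should also double-check the edge case where $\Tr(\Pi\sigma_{AB}\Pi) < \Tr(\sigma_{AB})$: the argument above gives equality of the two output cq-states on their respective supports and the conditional entropies still match, so $\Delta=0$ is valid; if one preferred, invoking Lemma~\ref{ucduplemma} with $\epsilon=0$ also yields the bound $0$ directly once the trace-distance between outputs is shown to be zero.
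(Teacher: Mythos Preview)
Your proposal has a genuine gap at the central step. The claimed identity
\[
\Phi\bigl((\Pi\otimes\mathbbm{1}_E)\sigma_{ABE}(\Pi\otimes\mathbbm{1}_E)\bigr) \;=\; \Phi(\sigma_{ABE})
\]
is false whenever $\sigma_{AB}$ has weight outside $\im\Pi$. The error is the assertion that ``the discarding step erases $\Pi$'': for a proper projection $\Pi$ on $AB$, one has $\Tr_{AB}[\Pi X\Pi]\neq \Tr_{AB}[X]$ in general (already the total traces differ by $\Tr[\bar\Pi\sigma_{AB}]$). Commutation of $\Pi$ with the Kraus operators lets you pull $\Pi$ through $\nondest$ and $V$, but once you reach the partial trace over $AB$ the projection does \emph{not} disappear; it removes the $\bar\Pi$ contribution. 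Consequently $f(\Pi\sigma_{AB}\Pi)=f(\sigma_{AB})$ fails, and your later remark that ``the conditional entropies still match'' in the subnormalized case is not justified either.

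What is actually true, and what the paper uses, is the decomposition
\[
\Phi(\sigma_{ABE}) \;=\; \Phi(\Pi\sigma_{ABE}\Pi)\;+\;\Phi(\bar\Pi\sigma_{ABE}\bar\Pi),
\]
obtained by inserting the \emph{pinching} channel $\Xi_{AB}(\cdot)=\Pi\cdot\Pi+\bar\Pi\cdot\bar\Pi$ before the trace (the pinching \emph{is} erased by $\Tr_{AB}$, since $\Pi+\bar\Pi=\mathbbm{1}$) and then commuting it through $\nondest$ and $V$ exactly as you describe. From here one cannot get equality; instead one invokes concavity of conditional entropy to obtain $H(Z|[E])_{\Phi(\sigma_{ABE})}\ge H(Z|[E])_{\Phi(\Pi\sigma_{ABE}\Pi)}+H(Z|[E])_{\Phi(\bar\Pi\sigma_{ABE}\bar\Pi)}$, and then drops the second term using nonnegativity of $H(Z|[E])$ for classical--quantum states. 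This yields only the one-sided inequality $f(\sigma_{AB})\ge f(\Pi\sigma_{AB}\Pi)$, which is exactly what $\Delta=0$ requires. Your outline is salvageable once you replace the single $\Pi$ by the pinching $\Xi_{AB}$ and downgrade equality to this inequality.
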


\begin{proof}
Recall that $\Pi$ only acts on the $AB$ subsystem. Then, $\Pi\rho_{ABE}\Pi$ is a purification of $\Pi\rho_{AB}\Pi$. $\Pi$ commutes with $V$ as they act on different subsystems, and commutes with all elements of the POVMs $\{P_A^i\}$ and $\{P_B^j\}$ by assumption. By the definition of $\nondest$ (Eq. \eqref{nondestdefn}), it then follows that
\begin{equation}
\Pi \ V \ \nondest\left(\rho_{ABE}\right) \ V^\dagger \ \Pi =  V \ \nondest\left(\Pi\rho_{ABE} \Pi \right) \ V^\dagger
\end{equation}
and analogously for $\bar{\Pi}$.

Defining the channel $\Xi_{AB}(\rho)=\Pi\rho\Pi+\bar{\Pi}\rho\bar{\Pi}$, we have 
\begin{align}
f(\rho_{AB})&=H(Z|[E])_{ \Phi( \rho_{ABE} ) }\\ 
&=H(Z|[E])_{ \Tr_{A\overline{A}B\overline{B}}\left( V \nondest\left(\rho_{ABE}\right) V^\dagger \right) }\\ 
&=H(Z|[E])_{ \Tr_{A\overline{A}B\overline{B}}\left( \Xi_{AB}\left(V \nondest\left(\rho_{ABE}\right) V^\dagger \right)\right)} \label{channeltrace}\\ 
&=H(Z|[E])_{ \Tr_{A\overline{A}B\overline{B}}\left( V \nondest\left(\Xi_{AB}\left(\rho_{ABE}\right)\right) V^\dagger \right)}\label{channelcommute}\\ 
&\geq H(Z|[E])_{ \Phi( \Pi\rho_{ABE}\Pi)} + H(Z|[E])_{ \Phi( \bar{\Pi}\rho_{ABE}\bar{\Pi})}\label{concavity}\\ 
&\geq H(Z|[E])_{ \Phi( \Pi\rho_{ABE}\Pi)} \label{condentrpos} \\ 
&= f(\Pi\rho_{AB}\Pi),
\end{align}
where we have in Eq. \eqref{channeltrace} freely introduced a channel as it will be traced out,  in Eq. \eqref{concavity} used the concavity of conditional entropy, and in Eq. \eqref{condentrpos} used the nonnegativity of conditional entropy for classical-quantum states (see Eq. \eqref{postmap}). 
\end{proof}

\subsection{Choose Finite Set \texorpdfstring{$\sfin$}{}}\label{finitesetchoice}
Recall that the feasible set $\sinf$ for the infinite-dimensional optimization is given in Eq. \eqref{gammaset}. The form of this set is common to all protocols. We consider one particular way of choosing $\sfin$ which is to individually expand the constraints of $\sinf$, using $W$ as a parameter. Note that in order to have the highest key rates, we want $\sfin$ to be as small as possible, while still satisfying the containment in Eq. \eqref{containment}. Assuming $\sigma\in\sinf$, we will now list constraints that $\sigman\equiv\Pi\sigma\Pi$ must satisfy. 

We begin with the trace constraint, $\Tr(\sigma)=1$, which can be easily expanded. By the definition of $W$ in Eq. \eqref{Wsdp}, $1-W\leq\Tr(\sigman)\leq1$. 

We next consider the expectation constraints, $\Tr(\sigma \Gamma_i)=\gamma_i$. We choose to define the loosened constraints as $\gamma_i^{min}\leq\Tr(\sigman\Gamma_i)\leq\gamma_i^{max}$, where
\begin{align}
\label{gammamax}
\gamma_i^{max}&=\max_{\rho\in\sinf} \Tr(\Gamma_i \ \Pi \rho \Pi),\\
\label{gammamin}
\gamma_i^{min}&=\min_{\rho\in\sinf} \Tr(\Gamma_i \ \Pi \rho \Pi).
\end{align}
For concreteness, we now specialize our discussion to the case where $[\Pi,\Gamma_i]=0$ and $\Gamma_i\geq0 \ \forall i$. This condition is satisfied for all the protocols we study in this article. We emphasize that this is not a particularly strong assumption. With a judicious choice of $\Pi$, this condition can be achieved for many protocols. In the alternate, determining the bounds in Eqs. \eqref{gammamax} and \eqref{gammamin} without this assumption is also tractable. For example, tight bounds are derived even for non-commuting, non-positive, and unbounded observables $\Gamma_i$ in \cite{Killoran2011}.

In the following theorem, we derive the desired bounds on expectations in the finite subspace. 
\begin{thm}
	Let $\Gamma_i\geq0$ and $[\Pi,\Gamma_i]=0$. If $\Tr(\rho\bar{\Pi})\leq W$ and $\Tr(\rho\Gamma_i)=\gamma_i$, then $ \gamma_i-W\norm{\Gamma_i}_\infty\leq \Tr(\Pi\rho\Pi \ \Gamma_i)\leq\gamma_i$.
\end{thm}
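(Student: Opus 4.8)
The plan is to decompose $\Gamma_i$ along $\Pi$ and $\bar\Pi$ and exploit the commutation hypothesis to rewrite the quantity of interest. First I would observe that since $[\Pi,\Gamma_i]=0$ and $\Pi^2=\Pi$, cyclicity of the trace gives $\Tr(\Pi\rho\Pi\,\Gamma_i)=\Tr(\rho\,\Pi\Gamma_i\Pi)=\Tr(\rho\,\Gamma_i\Pi)$. Writing $\mathbbm{1}=\Pi+\bar\Pi$ and using $\Tr(\rho\Gamma_i)=\gamma_i$, this yields the exact identity
\begin{equation}
\Tr(\Pi\rho\Pi\,\Gamma_i)=\gamma_i-\Tr(\rho\,\Gamma_i\bar\Pi).
\end{equation}
So the theorem reduces to sandwiching the nonnegative-looking quantity $\Tr(\rho\,\Gamma_i\bar\Pi)$ between $0$ and $W\norm{\Gamma_i}_\infty$.

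For the upper bound in the statement, I would note $\Gamma_i\bar\Pi=\bar\Pi\Gamma_i\bar\Pi$ (again by commutativity), which is positive semidefinite since $\Gamma_i\geq0$; as $\rho\geq0$ as well, the trace of the product of two positive operators is nonnegative, so $\Tr(\rho\,\Gamma_i\bar\Pi)\geq0$ and hence $\Tr(\Pi\rho\Pi\,\Gamma_i)\leq\gamma_i$.

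For the lower bound, the key operator inequality is $\bar\Pi\Gamma_i\bar\Pi\leq\norm{\Gamma_i}_\infty\bar\Pi$, which follows because $\bar\Pi\Gamma_i\bar\Pi\leq\norm{\bar\Pi\Gamma_i\bar\Pi}_\infty\bar\Pi\leq\norm{\Gamma_i}_\infty\bar\Pi$. Pairing this with $\rho\geq0$ and using that $\rho\mapsto\Tr(\rho X)$ is monotone on positive operators gives $\Tr(\rho\,\Gamma_i\bar\Pi)=\Tr(\rho\,\bar\Pi\Gamma_i\bar\Pi)\leq\norm{\Gamma_i}_\infty\Tr(\rho\bar\Pi)\leq W\norm{\Gamma_i}_\infty$ by the hypothesis $\Tr(\rho\bar\Pi)\leq W$. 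Substituting into the identity above finishes the proof: $\gamma_i-W\norm{\Gamma_i}_\infty\leq\Tr(\Pi\rho\Pi\,\Gamma_i)\leq\gamma_i$.

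I do not expect a serious obstacle here; the only thing to be mildly careful about is that we are in a possibly infinite-dimensional, subnormalized setting, so I would note that $\Gamma_i$ is bounded (it has finite $\norm{\Gamma_i}_\infty$) while $\rho$ is trace-class, so all the traces appearing are well-defined and the manipulations (cyclicity, monotonicity of $\Tr(\rho\,\cdot\,)$, nonnegativity of the trace of a product of positive operators) are all valid. If one wanted to avoid even this, one could restrict attention to the finite-dimensional subspace $\Hfin$ together with a large finite truncation of its complement, but this is not necessary.
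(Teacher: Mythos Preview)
Your proof is correct and structurally the same as the paper's: both use the commutation hypothesis to reduce to the identity $\Tr(\Pi\rho\Pi\,\Gamma_i)=\gamma_i-\Tr(\bar\Pi\rho\bar\Pi\,\Gamma_i)$ and then bound the remainder term above and below. The only difference is in the tool used for the lower bound: the paper invokes H\"older's inequality, $\Tr(\bar\Pi\rho\bar\Pi\,\Gamma_i)\leq\norm{\bar\Pi\rho\bar\Pi}_1\norm{\Gamma_i}_\infty$, whereas you use the operator inequality $\bar\Pi\Gamma_i\bar\Pi\leq\norm{\Gamma_i}_\infty\bar\Pi$ together with monotonicity of $\Tr(\rho\,\cdot\,)$. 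These are equivalent here and equally elementary. The paper's route through $\sqrt{\Gamma_i}$ is marginally cleaner when $\Gamma_i$ is unbounded (it keeps all traces manifestly those of positive operators), but as both you and the paper note, the lower bound is vacuous in that case anyway.
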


\begin{proof} By the commutation relation, it follows that 
	\begin{align}
	\Tr(\Pi\rho\Pi\Gamma_i)&=\Tr(\sqrt{\Gamma_i}\Pi\rho\Pi\sqrt{\Gamma_i})\\
	&=\Tr(\Pi\sqrt{\Gamma_i}\rho\sqrt{\Gamma_i}\Pi).\label{commute}
	\end{align}
We now find the upper and lower bounds separately. 

For the upper bound, we simply note that the trace of a positive operator can only decrease under projection. Then,
	\begin{align}
	\Tr(\Pi\sqrt{\Gamma_i}\rho\sqrt{\Gamma_i}\Pi)&\leq\Tr(\sqrt{\Gamma_i}\rho\sqrt{\Gamma_i})\\
	&=\gamma_i.
	\end{align}

For the lower bound, recall H\"{o}lder's inequality, which states that for any two operators $A$ and $B$, $\Tr(A^\dagger B)\leq\norm{A}_p\norm{B}_q$, where $\frac{1}{p}+\frac{1}{q}=1$ so that $\norm{\cdot}_p$ and $\norm{\cdot}_q$ are dual Schatten norms. Then,
	\begin{align}
	\begin{split}
	\Tr(\Pi\sqrt{\Gamma_i}\rho\sqrt{\Gamma_i}\Pi)&=\Tr(\sqrt{\Gamma_i}\rho\sqrt{\Gamma_i}) \\&\quad - \Tr(\bar{\Pi}\sqrt{\Gamma_i}\rho\sqrt{\Gamma_i}\bar{\Pi})
	\end{split}\\
	&=\gamma_i-\Tr(\bar{\Pi}\rho\bar{\Pi} \ \Gamma_i)\\
	&\geq\gamma_i-\norm{\bar{\Pi}\rho\bar{\Pi}}_1 \norm{\Gamma_i}_\infty \label{holder}\\
	&\geq \gamma_i-W\norm{\Gamma_i}_\infty,
	\end{align}
where in Eq. \eqref{holder} we used H\"{o}lder's inequality. Note that this lower bound is trivial for unbounded observables.
\end{proof}

Finally we consider the reduced state constraint $\Tr_B(\sigma)=\reduced$. We could, using a complete Hermitian basis on system $A$, write the reduced state constraint as a set of expectations and expand it in the same manner as above. However, there are two better ways to perform the expansion. One approach works in general, while the other is tighter but only works in a specific case. In general, by the Fuchs-van de Graaf inequalities, $\frac{1}{2}\norm{\sigma-\sigman}_1\leq\sqrt{2W-W^2}$. By the monotonicity of trace distance under channels, and by the fact that taking the partial trace is a channel, this implies the constraint $\frac{1}{2}\norm{\reduced-\Tr_B(\sigman)}_1\leq\sqrt{2W-W^2}$. Alternatively, in the case where the projection has the form $\Pi=\Pi_A \otimes \Pi_B$, a simple positivity argument implies the constraint $\Tr_B(\sigman)\leq \Pi_A \reduced \Pi_A$. This constraint can be used even when $\mathcal{H}_A$ is infinite-dimensional.

Having expanded all the constraints in the case where the observables are positive operators and commute with the projection, we summarize the definitions of the original infinite-dimensional optimization:
\begin{IEEEeqnarray*}{uC'L}
$\displaystyle{\minimize_{\rho}}$& &f(\rho) \\
subject to:& 	&\Tr(\rho)=1\\
\IEEEyesnumber&	&\Tr_B(\rho)=\reduced \\
&	&\Tr(\rho\Gamma_i)=\gamma_i \\
& &\rho\in \pos{{\mathcal{H}_\infty}},
\end{IEEEeqnarray*}
and the expanded finite-dimensional optimization:
\begin{IEEEeqnarray*}{uC'L}
 \label{sfinoptim}
$\displaystyle{\minimize_{\rhot}}$& &f(\rhot) \\
subject to:& 	&1-W\leq\Tr(\rhot)\leq1\\
\IEEEyesnumber &	&\frac{1}{2}\norm{\Tr_B(\rhot)-\reduced}_1\leq \sqrt{2W-W^2} \\
&	&\gamma_i-W\norm{\Gamma_i}_\infty\leq\Tr(\rhot\Gamma_i)\leq\gamma_i\\
& &\rhot\in \pos{\mathcal{H}_N}.
\end{IEEEeqnarray*}
When $\Pi=\Pi_A\otimes\Pi_B$, the constraint on the reduced density matrix is instead $\Tr_B(\rhon)\leq \Pi_A \reduced \Pi_A$. By construction, $\sfin$ is a convex subset of $\densityH[\Hfin]$ containing $\Pi \sinf \Pi$. The tightness of the feasible set is guaranteed by the fact that for $W=0$, $\sfin=\sinf$. For nonzero $W$, our numerical results in Sec. \ref{fss} provide strong evidence that our definition of $\sfin$ is close to the optimal choice $\Pi \sinf \Pi$.

\section{Numerical Implementation}\label{numerical}
\subsection{Review of Reliable Numerical Method}
Having formulated the finite-dimensional optimization of interest, we now wish to solve it numerically. In order to obtain a reliable lower bound in spite of floating-point imprecision and the imperfection of convex solvers, we use the numerical framework developed in \cite{Winick2018}. We briefly summarize the two steps of the method here. 

The first step is to approximately solve the optimization as written, using a method such as the Frank-Wolfe algorithm. This returns an approximate minimum, $\rho_{opt}$. The gradient of the objective function at this point $\nabla f(\rho_{opt})$ is computed. In the second step an expanded, linearized SDP is constructed from the first step optimization. To linearize, the objective function is replaced with $\Tr[\rho \nabla f(\rho_{opt})]$. Intuitively, this represents lower bounding the original convex function $f$ by a tangent hyperplane. To expand, the feasible set is enlarged by a small amount to account for numerical imprecision. Finally, the dual of this expanded, linearized SDP is derived. This dual SDP is solved numerically and any feasible point is a reliable lower bound on the original convex minimization. As discussed in Appendix \ref{epsilonchange}, we make use of one small improvement in how we expand the set in the second step, to account for numerical imprecisions, compared to \cite{Winick2018}. This change is important because it allows us to improve our numerical results at long distances (see Sec. \ref{dmcvqkdresults}). 

Note that the constraint operators $\Gamma_i$ and the POVM elements $P_k$ defining $f$ are, in general, infinite-dimensional. However, as $\Tr(\Pi \rho \Pi \ X)=\Tr[(\Pi \rho \Pi) ( \Pi X \Pi )]$, we can equivalently set $\Gamma_i \rightarrow \Pi \Gamma_i \Pi$ and $P_k\rightarrow \Pi P_k \Pi$. In the following discussion, we tacitly assume this substitution has been made in order to represent the optimizations numerically.

\subsection{SDP Formulation for Numerics}
To apply the numerical framework, we need the feasible set of the convex minimization to be that of an SDP. To show that $\sfin$ is such a set, we rewrite the trace distance constraint for the reduced density matrix. The trace norm can be expressed as an SDP \cite{Watrous2018}, which allows us to rewrite the constraint using slack variables. This equivalent reformulation of the finite-dimensional optimization is given by:
\begin{IEEEeqnarray*}{uC'L}
$\displaystyle{\minimize_{\rhot, R, S}}$& &f(\rhot) \\
subject to:& 	&1-W\leq\Tr(\rhot)\leq1\\
&	&\Tr(R)+\Tr(S)\leq 2\sqrt{2W-W^2} \\
&   &\Tr_B(\rhot)-R\leq\reduced\\
\IEEEyesnumber &   &-S-\Tr_B(\rhot)\leq-\reduced\\
&	&\gamma_i-W\norm{\Gamma_i}_\infty\leq\Tr(\rhot\Gamma_i)\leq\gamma_i\\
& &\rhot\in \pos{\mathcal{H}_N}\\
& & R,S\in \pos{\mathcal{H}_A}.
\end{IEEEeqnarray*}

Let $\xi$ denote the adjoint of the partial trace map $\Tr_B$ restricted to operators on $\Hfin$. Following the numerical framework, the corresponding dual linearized SDP is:
\begin{IEEEeqnarray*}{uC'L}
$\displaystyle{\maximize_{\vec{y},y_s,Y_1,Y_2}}$ & & -\vec{y}\cdot(\vec{\gamma}+\vec{\epsilon}_{rep})- y_s(2 \sqrt{2W-W^2}+\epsilon'_{rep}) \\
& & \quad-\Tr(\reduced Y_1)+\Tr(\reduced Y_2) \\
subject to:& 	&\sum_{i=1}^{2m} y_i\Gamma_i + \xi(Y_1) - \xi(Y_2) \geq - \nabla f(\rho_{opt})\\
&	&y_s \mathbbm{1}_A \geq Y_1\\
\IEEEyesnumber&	&y_s \mathbbm{1}_A \geq Y_2\\
&&\vec{y}\in \mathbbm{R}_{\geq 0}^{2m}\\
&&y_s\in\mathbbm{R}_{\geq 0}\\
&&Y_1, Y_2 \in \pos{\mathcal{H}_{A}}
\end{IEEEeqnarray*}
where $\vec{\gamma}=(\{\gamma_i\}_{i=1}^m, \{W\norm{\Gamma_i}- \gamma_i\}_{i=1}^m)$, $\vec{\Gamma}=(\{\Gamma_i\}_{i=1}^m, \{-\Gamma_i\}_{i=1}^m) $, and $\vec{\e}_{rep}$ and $\e_{rep}'$ are the expansion parameters which account for finite numerical precision (see Appendix \ref{epsilonchange} for further discussion).

As noted in Sec. \ref{finitesetchoice}, the finite-dimensional optimization has a slightly different form when the projection only acts on Bob's system, that is $\Pi=\mathbbm{1}_A \otimes \Pi_B$. In this case, the dual SDP is the same as above, but the term in the objective function and both constraints involving $y_s$ are removed, $Y_2$ is set to zero, and the term $-\e_{rep}' \Tr(Y_1 )$ is added to the objective function.

\section{Discrete-Modulated CVQKD \label{dmcvqkdW}}
We apply our approach for infinite-dimensional key rate calculations to DMCVQKD. We begin by reviewing the protocol and setting up the infinite-dimensional optimization. We then apply the steps of our method from Sec. \ref{frameworkapp} to relate the key rate to a finite-dimensional optimization. Finally, we solve this finite problem numerically to obtain key rates. Unless otherwise stated, all discussion is in the ideal detector scenario. We note that although continuity bounds have been considered in the context of DMCVQKD in \cite{Kaur2021}, this is very different from our work, as the bounds in \cite{Kaur2021} are used to quantify how well Gaussian modulation is approximated.
\subsection{Protocol Description}
We briefly review the DMCVQKD protocol. More details can be found in \cite{Lin2019}. In each round of the quantum phase, Alice prepares one of $d$ coherent signal states $\{\ket{\alpha_i}\}_{i=0}^{d-1}$ with probability $p(i)$. Our security proof works for any constellation of signal states, but we focus on the protocol with four symmetrically modulated states, also known as \emph{quadrature phase-shift keying}. In this case, Alice prepares the states $\{\ket{\alpha},\ket{i \alpha},\ket{-\alpha}, \ket{-i \alpha}\}$ uniformly at random, for a fixed signal state amplitude $\alpha$. Bob then performs either a homodyne or heterodyne measurement. We focus on the case where he does a heterodyne measurement. Bob's POVM is then $\{\frac{1}{\pi} \dyad{\zeta} \}_{\zeta\in\mathbbm{C}}$, while Alice's POVM is $\{\dyad{i} \}_{i=0}^{d-1}$ as we work in the source-replacement picture.

We focus on reverse reconciliation as it is known to outperform direct reconciliation in terms of transmission distance. Again, our security proof can be easily adapted to direct reconciliation. In reverse reconciliation, Bob maps his heterodyne measurement outcome $\zeta$ to a key symbol based on which region in the complex plane, or phase space, the outcome falls in. The key map is visualized in Fig. \ref{keymap}. Bob can also perform postselection (sifting) to improve the key rate of the protocol \cite{Silberhorn2002}. 

\begin{figure}
\centering
\includegraphics[width=\linewidth]{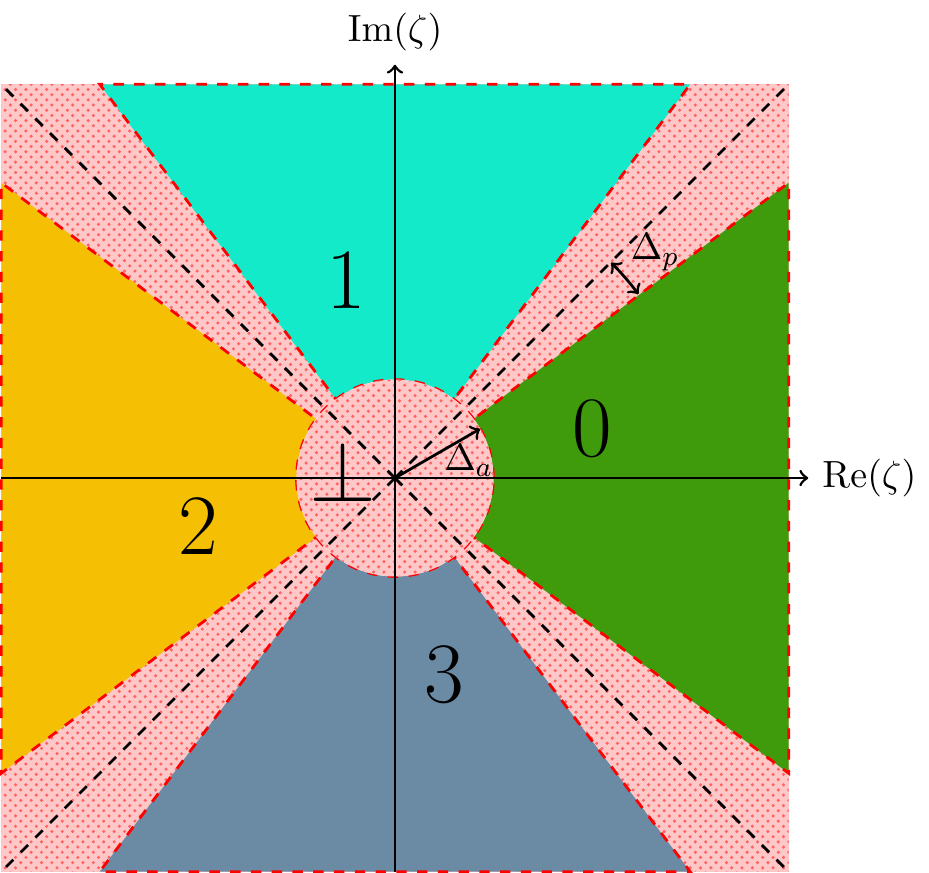}
\caption{Phase-space regions for the key map, with postselection, for reverse reconciliation in QPSK DMCVQKD. Bob obtains the measurement result $\zeta$ from the heterodyne detector. He maps the outcome to the symbol of the region containing $\zeta$. $\Delta_a$ and $\Delta_p$ are amplitude and phase postselection parameters, respectively. The $\perp$ region corresponds to signals that are discarded.}
\label{keymap}
\end{figure}

After a large number of rounds, Alice and Bob move on to the classical phase. For parameter estimation, Alice announces which signal state was sent. Since Bob's measurement determines the Husimi Q-function, he could in principle perform complete tomography to uniquely determine the received conditional states. This would not extend reasonably to a finite-key analysis. We thus choose to only use certain coarse-grained observables as constraints. (It is then unnecessary for Bob to announce his fine-grained measurement results, he only needs to announce these coarse-grained expectations.) In Appendix \ref{expectationapp}, we outline how to calculate the coarse-grained expectations from the fine-grained probability distribution. For reverse reconciliation with postselection, Bob also announces which signals are discarded. Finally, Alice and Bob perform error correction and privacy amplification on the sifted key.

\subsection{Infinite-Dimensional Optimization}
To formalize this description of the protocol, we define the objective function $f$ and the observables $\Gamma_i$.

\subsubsection{Objective Function}
Recall the definition of the postprocessing map in Eq. \eqref{postmap}. Outside of the testing rounds, the only announcements in this protocol are for sifting. Since Bob defines the key map and makes the sifting decision, this means that Alice's data is irrelevant for the postprocessing map. Thus, for the purpose of the postprocessing map, we can take Alice's POVM to trivially be $\{\mathbbm{1}_A\}$. For Bob's POVM, we do not need to work with the fine-grained set and can instead deal directly with the coarse-grained region operators. These region operators can be expressed as linear combinations of the fine-grained POVM elements. The region operators corresponding to non-discarded signals are
\begin{align}
R^j_B&=\frac{1}{\pi}\int_{\Delta_a}^{\infty} \int_{\frac{(2j-1)\pi}{4}+\Delta_p}^{\frac{(2j+1)\pi}{4}-\Delta_p} r \dyad{r e^{i\theta}} d\theta dr,
\end{align}
for $j\in\{0,1,2,3\}$.  Here $\Delta_a$ and $\Delta_p$ are postselection parameters for the amplitude and phase. Not performing postselection corresponds to setting $\Delta_a,\Delta_p=0$. 
The key map $g$ trivially maps Bob's private measurement result to the key register. Thus, we can omit the key map isometry altogether, and simply relabel the register $\overline{B}$ to $Z$. The simplified completely positive and trace non-increasing form of the postprocessing map $\Phi$ in Eq. \eqref{postmap} is then
\begin{equation}
\label{dmcvqkdmap}
\Phi(\rho_{ABE})= \sum_{z=0}^{3} \dyad{z}_Z \otimes \Tr_{AB}[\rho_{ABE} \ (\mathbbm{1}_A \otimes R^z_B )],
\end{equation}
where we have omitted trivial or redundant registers \cite{Lin2019}.

\subsubsection{Observables}
Alice and Bob's POVMs are as given above. As discussed, we need to choose a set of coarse-grained observables for Bob's side of parameter estimation. Typically, these observables are taken to be the quadratures and their higher moments \cite{Lin2019}. However, we introduce a different set of observables which will simplify our development, as they will commute with the projection we later define (Eq. {\eqref{dmcvqkdprojection}}). They will be parametrized by a list of complex numbers $\{\beta_i\}_{i=0}^{d-1}$. For any operator $X$, we use the shorthand notation $X_{\gamma}\equiv \disp{\gamma} X \dispd{\gamma}$ where $\disp{\gamma}$ is the displacement operator with complex parameter $\gamma$. Denoting the photon number operator by $\hat{n}\equiv \hat{a}^\dagger \hat{a}$, we choose as constraint observables for our protocol the set
\begin{equation}
\label{dmcvqkdgammas}
\{\Gamma_i\}=\{\dyad{i}\otimes \obsn, \dyad{i}\otimes \obsnsq\}_{i=0}^{d-1}.
\end{equation}
We consider second order constraints in $\hat{n}$ as this will be necessary to make the weight $W$ outside the subspace sufficiently small. We choose the displacements to be
\begin{equation}
\beta_i=\sqrt{\eta}\alpha_i,
\end{equation}
which are the amplitudes of the coherent signal states after passing through a Gaussian channel with loss $\eta$. As will be discussed in Sec. \ref{dmcvqkdresults}, our choice here is based on the expected channel behavior in an honest implementation of the protocol. We emphasize however that our security proof method works for any choice of $\{\beta_i\}$.

This choice of coarse-grained observables is of interest as it elucidates some of the essential working principles of the DMCVQKD protocol. The observables measure how spread out a state is compared to the coherent state $\ket{\beta_i}$. Intuitively, this characterizes the deviation from a generalized beamsplitting attack \cite{Heid2006}. These constraints also dovetail with a natural choice for the finite subspace.

Note that although Bob's observables have a dependence on the signal state, Bob physically performs the same heterodyne measurement each round. For parameter estimation, he simply holds all fine-grained data and coarse grains only after Alice announces which signal state was sent.

\subsubsection{Optimization Formulation}
We are now able to write down the infinite-dimensional optimization for DMCVQKD. Let $\mathcal{H}_A$ be the Hilbert space with dimension equal to the number of signal states $d$. Let $\mathcal{H}_B$ be the Hilbert space of a single optical mode, spanned by Fock states $\{\ket{n}\}_{n=0}^\infty$. The minimization is then:
\begin{IEEEeqnarray*}{uC'L}
$\displaystyle{\minimize_{\rho}}$& &f(\rho) \\
subject to:& 	&\Tr(\rho)=1\\
&&	\Tr_B(\rho)=\sum_{i,j} \sqrt{p(i) p(j)} \braket{\alpha_j}{\alpha_i}\dyad{i}{j}  \\
\IEEEyesnumber&	&\Tr[ \rho \left(\tfrac{1}{p(i)}\dyad{i}{i} \otimes \obsn \right) ]=\expn \label{dmcvqkdinf}\\
 &   &\Tr[\rho \left(\tfrac{1}{p(i)}\dyad{i}{i} \otimes \obsnsq \right) ]=\expnsq \\
& &\rho\in \pos{\mathcal{H}_A \otimes \mathcal{H}_B}.
\end{IEEEeqnarray*}

\subsection{Finite-Dimensional Optimization and Correction Term}
We apply the steps of our method to convert the preceding infinite-dimensional optimization into a tractable finite-dimensional one, and determine the associated correction term.

\subsubsection{Choose Subspace \texorpdfstring{$\mathcal{H}_N$}{}} \label{dmcvqkdfiniteset}
Recall our general principle is to choose the subspace containing the most weight of the state under a typical channel model. As discussed in Sec. \ref{dmcvqkdresults}, the channel model for a fiber-based implementation of this protocol is a lossy and noisy Gaussian channel. As noted above, for a pure-loss channel the expected coherent state is $\ket{0_{\beta_i}}$, where $\ket{n_{\gamma}}\equiv\disp{\gamma}\ket{n}$. For the specific noise model we consider, the expected state will be a displaced thermal state. This means some weight leaks into displaced Fock states with $n>0$. For the $i^{th}$ signal state, the best projection on $\mathcal{H}_B$ is thus
\begin{equation}
\PiN=\sum_{n=0}^N  \dyad{n_{\beta_i}}.
\end{equation}
We will refer to $N$ as the subspace dimension parameter. The projection operator on the total Hilbert space is
\begin{equation}
\label{dmcvqkdprojection}
\Pi^N\equiv\sum_{i=0}^{d-1} \dyad{i}_A \otimes \PiN.
\end{equation}
Note that this projection commutes with the observables (Eq. \eqref{dmcvqkdgammas}) because $\PiN$ commutes with $\obsn$. We use this fact to define the finite set(see Sec. {\ref{dmcvqkdfinitesetdefinition}}).

Unlike the truncated Fock basis considered in previous work \cite{Ghorai2019,Lin2019}, our finite subspace contains the full weight of the state when the channel is purely lossy. This is important as it ensures our numerics exactly reproduces the analytically solvable loss-only case. 

\subsubsection{Bound Weight \texorpdfstring{$W$}{} Outside Subspace}\label{dmcvqkdweight}
To bound the weight outside the subspace, we analytically solve the dual of the SDP in Eq. \eqref{Wsdp}. Our result is stated in the following theorem.
\begin{thm}[Bound on W for DMCVQKD]
\label{weightbound}
For the DMCVQKD protocol, with $\Pi^N$ as defined in Eq. \eqref{dmcvqkdprojection}, the weight outside the subspace is bounded by
\begin{equation}
W= \sum_{i=0}^{d-1} p(i) \frac{\expnsq-\expn}{N(N+1)}.
\end{equation}
\end{thm}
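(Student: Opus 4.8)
The plan is to upper bound the primal maximum $\max_{\rho\in\mathbf{S}^{QKD}}\Tr(\rho\bar{\Pi}^N)$ appearing in Eq.~\eqref{Wsdp} by exhibiting a dual-feasible certificate, namely a nonnegative combination of the photon-number constraint operators (with zero coefficient on the trace and reduced-state constraints) that dominates $\bar{\Pi}^N$ as an operator inequality. Since for $\rho\geq 0$ and operators $X\leq Y$ one has $\Tr(\rho X)\leq\Tr(\rho Y)$, such an inequality converts directly into the desired bound once the constraint values are substituted; this is weak duality, so no delicate argument about strong duality is needed for the bound itself.

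First I would note that, because $\disp{\beta_i}$ is unitary, the displaced Fock states $\{\ket{n_{\beta_i}}\}_{n=0}^\infty$ form a complete orthonormal basis of $\mathcal{H}_B$, so from Eq.~\eqref{dmcvqkdprojection} we may write $\bar{\Pi}^N=\sum_{i=0}^{d-1}\dyad{i}_A\otimes\PiNn$ with $\PiNn=\sum_{n=N+1}^\infty\dyad{n_{\beta_i}}$. The crux is then the single-mode operator inequality
\begin{equation}
\PiNn\;\leq\;\frac{\obsnsq-\obsn}{N(N+1)}.
\end{equation}
Both sides are diagonal in the basis $\{\ket{n_{\beta_i}}\}$ (here $\obsn=\disp{\beta_i}\hat n\dispd{\beta_i}$ has the same spectrum $\{0,1,2,\dots\}$ as $\hat n$), so the inequality reduces to the scalar claim that the indicator of $\{n\geq N+1\}$ is at most $\tfrac{n(n-1)}{N(N+1)}$ for every integer $n\geq 0$: the right side is nonnegative for all such $n$, which covers $n\leq N$, and for $n\geq N+1$ it is at least $\tfrac{(N+1)N}{N(N+1)}=1$. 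Because the two operators are simultaneously diagonalizable, the comparison is legitimate despite $\obsnsq$ being unbounded, read as an inequality of the associated quadratic forms on the natural domain.

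To finish, I would tensor with $\dyad{i}_A$, sum over the signal index, and take the trace against an arbitrary $\rho\in\mathbf{S}^{QKD}$, obtaining
\begin{equation}
\Tr(\rho\bar{\Pi}^N)\;\leq\;\frac{1}{N(N+1)}\sum_{i=0}^{d-1}\Tr\!\left[\rho\left(\dyad{i}_A\otimes(\obsnsq-\obsn)\right)\right].
\end{equation}
Substituting the photon-number constraint values from Eq.~\eqref{dmcvqkdinf} makes the right-hand side equal to $\sum_{i=0}^{d-1}p(i)\tfrac{\expnsq-\expn}{N(N+1)}$, which no longer depends on $\rho$ and therefore bounds the maximum, giving a valid choice of $W$.

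The only genuine obstacle is justifying the operator inequality in the infinite-dimensional, unbounded setting — that diagonality in the displaced Fock basis really licenses the termwise scalar comparison for $\obsnsq-\obsn$ — which the quadratic-form viewpoint settles. As a bonus, the scalar inequality is saturated at $n=N+1$ (and at $n=0,1$), which indicates this dual point is the optimal one, so that the stated $W$ is the tightest bound obtainable from this relaxation.
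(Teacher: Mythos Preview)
Your proposal is correct and is essentially the same argument as the paper's: both exhibit the same dual certificate, namely the operator inequality $\PiNn\leq\tfrac{\obsnsq-\obsn}{N(N+1)}$ (equivalently $y_1=0$, $y_2=-\tfrac{1}{N(N+1)}$, $y_3=\tfrac{1}{N(N+1)}$), verified by diagonality in the displaced Fock basis. The only cosmetic difference is that the paper first decomposes $\Tr(\rho\bar\Pi^N)$ into a $p(i)$-weighted sum over Bob's conditional states and then writes out the primal/dual SDP formally for each, whereas you work directly on the bipartite operator inequality and substitute the constraint values at the end.
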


\begin{proof}
To prove this theorem, we will consider Bob's conditional states $\rho_B^i = \frac{1}{p(i)} \Tr_A\left[\rho_{AB} \left(\dyad{i}_A \otimes \mathbbm{1}_B\right) \right]$. Let $\PiNn\equiv\mathbbm{1}_B-\PiN$ and let $W_i\equiv\Tr(\rho_B^i \PiNn)$ be the weight of the $i^{th}$ conditional state. We first show that $W=\sum_i p(i) W_i$.
\begin{align}
\Tr(\rho \bar{\Pi}^N)&=\Tr[\rho \left( \sum_{i=0}^{d-1} \dyad{i}_A \otimes \PiNn\right) ]\\
&= \sum_{i=0}^{d-1} \Tr(\tilde{\rho}_B^i \PiNn)\\
&= \sum_{i=0}^{d-1} p(i) W_i. \label{conditionalweight}
\end{align}
Now, we only need to bound the weight of each conditional state. Using the constraints from Eq. \eqref{dmcvqkdinf}, each of these bounds can be expressed as a primal SDP.
\begin{IEEEeqnarray*}{uC'L}
$\displaystyle{\maximize_{\rho}}$& &\Tr(\PiNn \rho) \\
subject to:& 	&\Tr(\rho)=1\\
\IEEEyesnumber&	&\Tr ( \obsn \rho )=\expn\\
&	&\Tr ( \obsnsq \rho )=\expnsq\\
& &\rho\in \pos{\mathcal{H}_B}.
\end{IEEEeqnarray*}
In order to find an upper bound on this primal SDP we consider its dual. By weak duality, it holds that a feasible solution to the dual SDP upper bounds the primal. In fact, strong duality holds for this SDP, so this upper bound can be made tight.
\begin{IEEEeqnarray*}{uC'L}
$\displaystyle{\minimize_{\vec{y}}}$& &y_1+\expn y_2+\expnsq y_3 \\
\IEEEyesnumber subject to:& &	y_1 \mathbbm{1}_B + y_2 \obsn +y_3 \obsnsq - \PiNn \geq0\\
& &\vec{y} \in \mathbbm{R}^3\
\end{IEEEeqnarray*}
A feasible solution for the dual is $y_1=0$, $y_2=\frac{-1}{N(N+1)}$, $y_3=\frac{1}{N(N+1)}$. (This is in fact the optimal solution.) This can be easily verified as all operators in the constraint are diagonal in the $\ket{n_{\beta_i}}$ basis, so positivity is implied if and only if the diagonal entries are nonnegative. This solution leads to the objective value $W_i=\frac{\expnsq-\expn}{N(N+1)}$. Substituting into Eq. \eqref{conditionalweight}, the proof is complete.
\end{proof}

\subsubsection{Determine Correction Term \texorpdfstring{$\Delta$}{}} 
The correction term, as a function of the weight $W$ and key map register dimension $\abs{Z}$, is given in Theorem \ref{ucdupthm}. We use the value of $W$ determined in Sec. \ref{dmcvqkdweight}. Regardless of whether postselection is performed, $\abs{Z}=4$ as there are four non-discarded key outcomes (see Sec. \ref{correctiontermderivation}.

\subsubsection{Choose Finite Set \texorpdfstring{$\sfin$}{}}\label{dmcvqkdfinitesetdefinition}
By design, the projection $\Pi^N$ (Eq. \eqref{dmcvqkdprojection}) commutes with the positive observables $\Gamma_i$ (Eq. \eqref{dmcvqkdgammas}). We can thus use the form of $\sfin$ in Eq. \eqref{sfinoptim}. The finite-dimensional optimization is:
\begin{IEEEeqnarray*}{uC'L}
$\displaystyle{\minimize_{\rhot}}$& &f(\rhot) \\
subject to:& 	&1-W\leq\Tr(\rhot)\leq1\\
&	&\frac{1}{2}\norm{\Tr_B(\rhot)-\reduced}_1\leq \sqrt{2W-W^2} \\
\IEEEyesnumber&   &\Tr[\rhot\left(\tfrac{1}{p(i)}\dyad{i}{i} \otimes \obsn \right) ]\leq\expn \\
   & &\Tr[\rhot\left(\tfrac{1}{p(i)}\dyad{i}{i} \otimes \obsnsq \right) ]\leq\expnsq \\
& &\rhot\in \pos{\mathcal{H}_N}
\end{IEEEeqnarray*}
where $\reduced=\sum_{ij} \sqrt{p(i) p(j)} \braket{\alpha_j}{\alpha_i}\dyad{i}{j}$ and $W= \sum_i p(i) \frac{\expnsq-\expn}{N(N+1)}$. Note that the lower bounds in Eq. \eqref{sfinoptim} are not useful as the observables $\Gamma_i$ are unbounded.

In order to implement the optimization numerically, we need to choose a basis in which to represent our operators. The natural choice is $\{\ket{i}_A\otimes\ket{n_{\beta_i}}_B\}$. We compute the matrix elements of the observables and the objective function POVM in this basis in Appendix \ref{matrixops}. As this basis is not a standard construction for a bipartite Hilbert space, calculating the partial trace and its adjoint is slightly involved. We present these technical details in Appendix \ref{matrixops}.

\subsection{DMCVQKD with Trusted Detector Noise}
In the previous section, we have given a security proof for DMCVQKD assuming Bob's detector is ideal. We can extend this result to the scenario where Bob has imperfect but characterized detectors. We consider the specific model for detector imperfections given in \cite{Lin2020}. Namely, the two homodyne detectors comprising the heterodyne measurement have an associated efficiency and electronic noise. To illustrate our approach, we focus on the case where the two detectors have the same efficiency $\eta_d$ and electronic noise $\nu_{el}$. In this case, Bob's POVM elements are displaced thermal states, as opposed to coherent states \cite{Lin2020}. 

The protocol description is exactly the same as for the ideal detector case, except Bob's heterodyne detection is noisy. This noisy POVM enters into the optimization in two ways: changing the objective function $f$ and the observables $\Gamma_i$. To evaluate the objective function defined by a noisy POVM $f^{noisy}$, we need only calculate the new matrix elements. This calculation is presented in Appendix \ref{tnqo}. Critically, due to our coarse-graining, there is a simple relation between the noisy and ideal observables. Thus, our results on the bound of $W$ carry over from the ideal detector scenario. For any operator $A$, we denote its noisy counterpart by $\noisy{A}$. As shown in Appendix \ref{tnqo}, the ideal and noisy observables are related by linear combinations,
\begin{gather}
\label{nsyobs1}
\nsyobsn=\eta_d \obsn + \nu_{el} \mathbbm{1},\\
\label{nsyobs2}
\nsyobsnsq=\eta_d^2  \obsnsq+\eta_d (4\nu_{el}+1-\eta_d) \obsn +(2\nu_{el}^2+\nu_{el})\mathbbm{1}.
\end{gather} 
Bob measures the observables displaced by $\sqrt{\eta_d}\beta_i$. With these noisy expectations, the ideal ones can effectively be recreated by inverting the relationships in Eqs. \eqref{nsyobs1}, \eqref{nsyobs2}. Explicitly,
\begin{gather}
\expneff=\frac{\nsyexpn-\nu_{el}}{\eta_d},\\
\begin{split}
\expnsqeff=\frac{1}{\eta_d^2} \bigg(\nsyexpnsq-2\vel^2-\vel\\-(4\vel+1-\eta_d)\left(\nsyexpn-\vel\right)\bigg).
\end{split}
\end{gather}
The finite-dimensional optimization is then:
\begin{IEEEeqnarray*}{uC'L}
$\displaystyle{\minimize_{\rhot}}$& &f^{noisy}(\rhot) \\
subject to:& 	&1-W\leq\Tr(\rhot)\leq1\\
&	&\frac{1}{2}\norm{\Tr_B(\rhot)-\reduced}_1\leq \sqrt{2W-W^2} \\
\IEEEyesnumber &   &\Tr[\rhot\left(\tfrac{1}{p(i)}\dyad{i}{i} \otimes \obsn \right) ]\leq\expneff \\
   & &\Tr[\rhot\left(\tfrac{1}{p(i)}\dyad{i}{i} \otimes \obsnsq \right) ]\leq\expnsqeff \\
& &\rhot\in \pos{\mathcal{H}_N}
\end{IEEEeqnarray*}
where $\reduced=\sum_{ij} \sqrt{p(i) p(j)} \braket{\alpha_j}{\alpha_i}\dyad{i}{j}$ and $W= \sum_i p(i) \frac{\expnsqeff-\expneff}{N(N+1)}$.

\subsection{Simulation Results}\label{dmcvqkdresults}
\subsubsection{Simulation Parameters}
To understand the performance of this protocol and demonstrate our security proof approach, we simulate expectation values obtained from a typical experiment. In particular, we model the signal states as passing through a noisy and lossy Gaussian channel. The transmittance $\eta$ is modelled as a function of distance $d$ according to $\eta=10^{-k \cdot d/10}$, where $k$ is the attenuation factor of the channel. We use a typical value for commerical-grade fiber $k=0.2$ \textrm{dB/km}. The excess noise $\xi$ is taken to be fixed at the channel input, for example as preparation noise, so that Bob sees the effective noise $\delta=\eta \xi$. The expectation values for this simulation are $\expn=\delta/2$ and $\expnsq=\delta(1+\delta)/2$, as derived in Appendix \ref{expectationapp}. This implies $W=\delta^2/[2N(N+1)]$. We emphasize that our security proof does not depend on these parameter choices and simulation model, which are only used to illustrate the performance of the protocol in a typical implementation.

To account for realistic error-correction costs, $\delta^{leak}_{EC}$ in Eq. \eqref{convexkeyrate} is taken to be $H(Z)-\beta_{EC}I(Z:X)$, where $\beta_{EC}$ characterizes the error-correction efficiency. We use $\beta_{EC}=0.95$ as a representative value. The error-correction cost is calculated by simulating the joint probability distribution obtained by Alice and Bob (see Appendix \ref{expectationapp}).

All our algorithms are implemented in \textsc{Matlab} R2019B, using the convex optimization package CVX 2.1 \cite{Grant2014,Grant2008} with the \textsc{Mosek} 8.0.0.60 solver \cite{Mosek2016}. The Frank-Wolfe algorithm, with a maximum of 30 iterations, is used to solve the first step of the numerical method. All parameter optimizations use the fminbnd algorithm included in the \textsc{Matlab} distribution, which uses a combination of parabolic interpolation and golden-section search.

We emphasize that all key rate plots include the correction term unless stated otherwise. That is, $R^\infty=C_{num}-\delta^{leak}_{EC}-\Delta(W)$, where $C_{num}$ denotes the reliable numerical lower bound on $f(\nrho)$. In order to evaluate the effect of the correction term, and to compare with previous work using the photon number cutoff assumption, we will find it useful to consider the uncorrected values, defined as $C_{num}-\delta^{leak}_{EC}$. 

\subsubsection{Key Rate Plots}
We present key rate plots for different choices of channel parameters $\eta$ and $\xi$, protocol parameters $\alpha$, $\Delta_a$, and $\Delta_p$, and the subspace dimension parameter $N$. For the trusted noise scenario, we also consider $\eta_d$ and $\nu_{el}$.

In Fig. \ref{rawkeyrates}, we compare the key rates and uncorrected values from our dimension reduction method to the key rates under the photon number cutoff assumption obtained in \cite{Lin2019}. In order to enable a meaningful comparison, we use the protocol parameters from \cite{Lin2019}. The uncorrected values, which are equal to the key rate before subtracting the correction term $\Delta$, are essentially identical to those in \cite{Lin2019}. As the results from \cite{Lin2019} are an upper bound on the key rate, this indicates our choice of $\sfin$ is tight. Further, our corrected key rates are very close to the uncorrected values. This illustrates our correction term is small for reasonable values of the subspace dimension parameter $N$, at low channel excess noise (see Fig. \ref{correctionfrac}). 

\begin{figure}
\includegraphics[width=\linewidth]{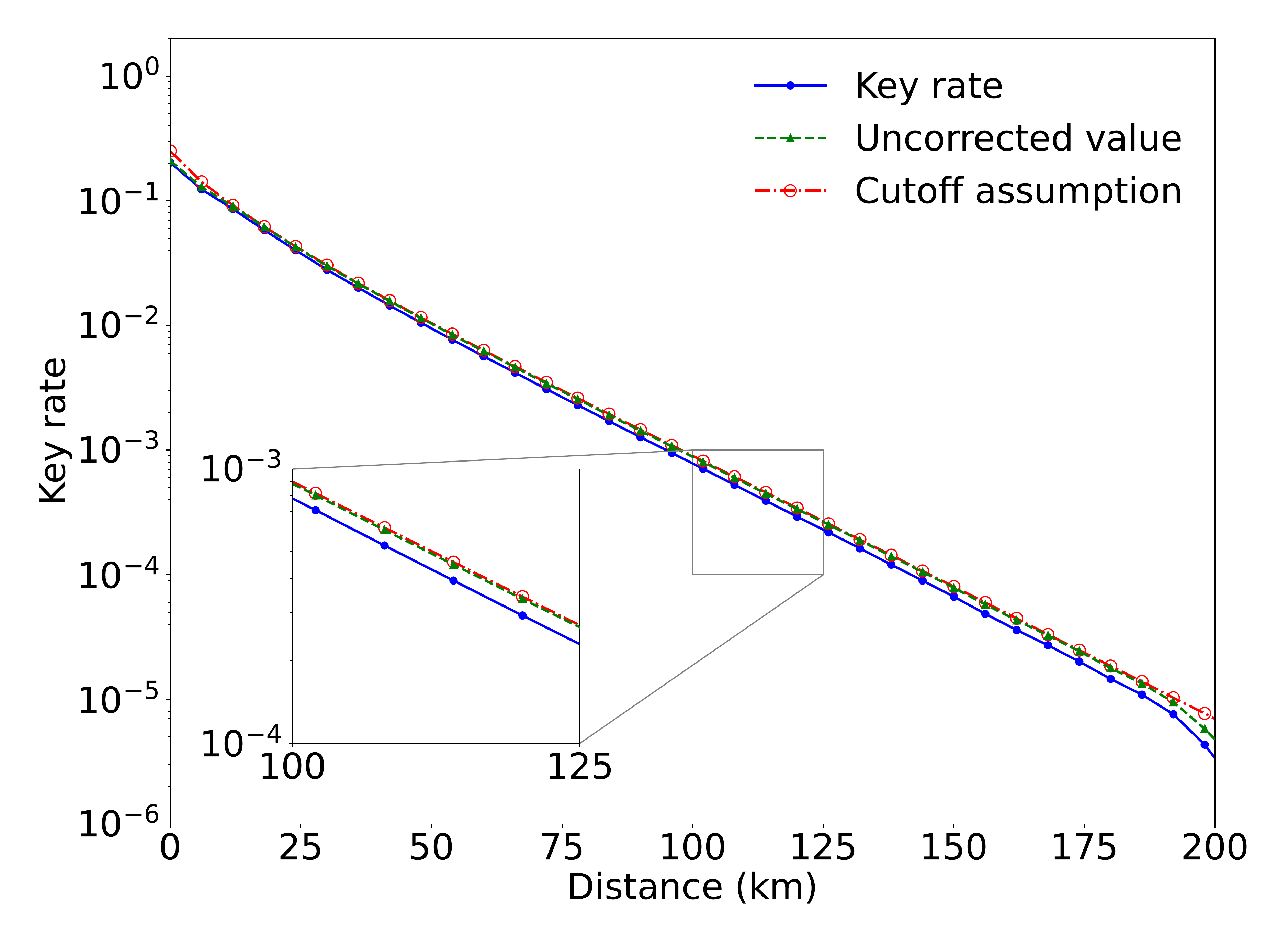}
\caption{Comparison of key rates and uncorrected values from our dimension reduction method with key rates under the photon number cutoff assumption from \cite{Lin2019}. Results are plotted versus distance with excess noise $\xi=0.01$, and are in the ideal detector scenario. Postselection parameters and signal state intensities from \cite{Lin2019} are used: $\alpha=0.6$, $\Delta_p=0$, and $\Delta_a$ is optimized with a coarse-grained search over $[0.5,0.65]$. The subspace dimension parameter is $N=20$.}
\label{rawkeyrates}
\end{figure}

As our key rates are very similar to the ones under the cutoff assumption, the qualitative conclusions of previous work \cite{Lin2019,Lin2020} are confirmed to hold under our precise treatment, without the previous working assumptions. We thus defer to \cite{Lin2019,Lin2020} for more extensive parameter exploration, and only focus on some important results in this section.

In Fig. \ref{optKRPSN3040max}, we plot the ideal detector key rates for different channel parameters. The signal state intensity and postselection parameters are numerically optimized for each distance and value of excess noise, using $N=10$. The key rates are calculated using $N=40$, except for a small number of points where we use $N=30$ to ameliorate numerics issues, as discussed in Appendix \ref{epsilonchange}. Using postselection extends the range of the protocol for high excess noise while also reducing the amount of data processing for error correction, which can be a bottleneck in actual implementations. For example, for $2\%$ excess noise, postselection increases the maximum distance by around $50$ km, while discarding $40$\% of the signals. The small number of outlying points that deviate from the trend are due to numerical issues inherent to convex solvers. We emphasize that these key rates are still rigorous, and can be improved by using higher numerical precision. 

\begin{figure}
\includegraphics[width=\linewidth]{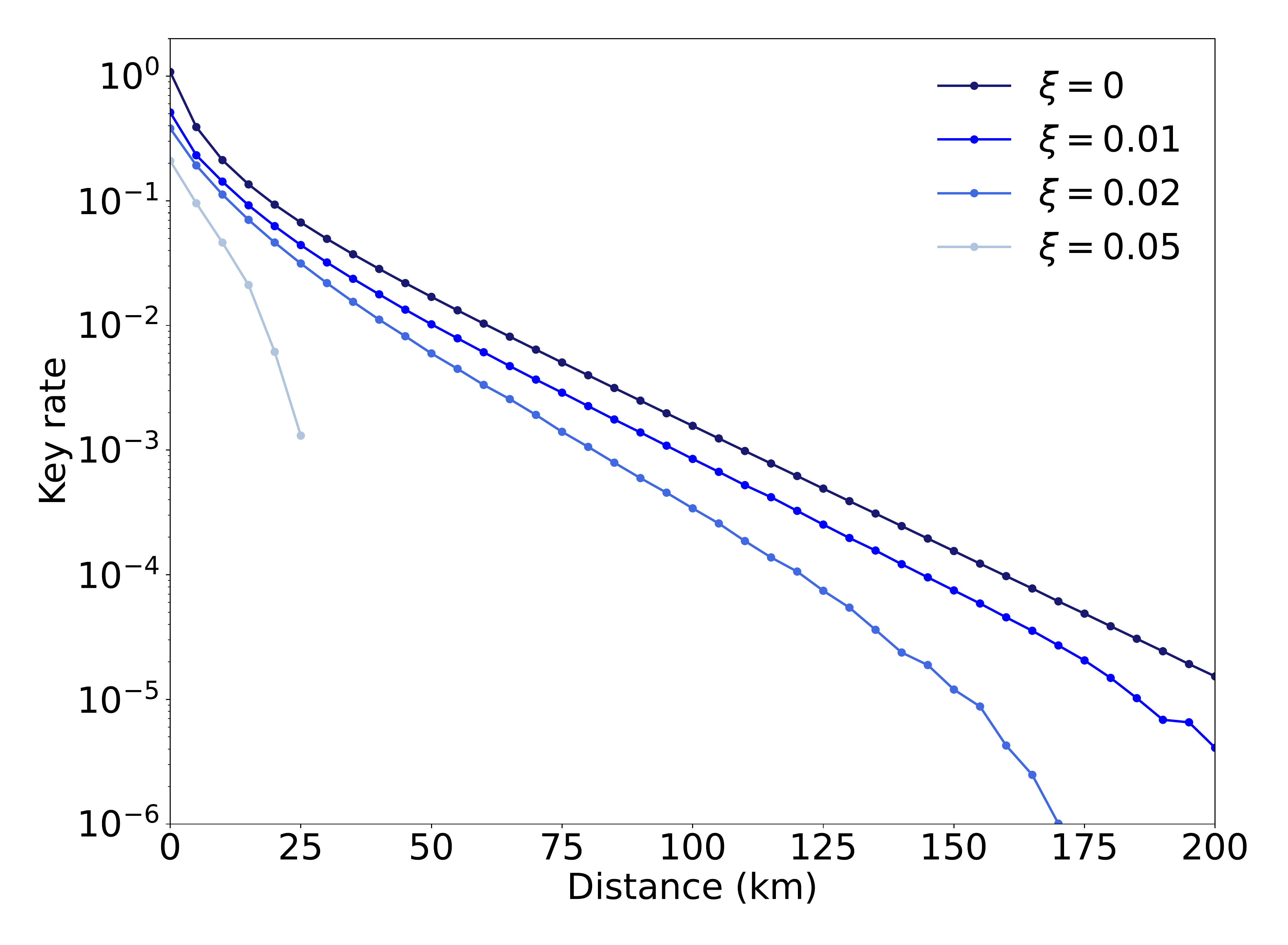}
\caption{Ideal detector secure key rates versus transmission distance, for different values of excess noise $\xi$, with optimized postselection parameters and signal state intensity. For each point, the better of the two results from $N=30$ and $N=40$ is used, with the majority of points from $N=40$. Postselection can improve the noise tolerance and range of the protocol.}
\label{optKRPSN3040max}
\end{figure}

To illustrate the relative size of the correction term, we plot it as a fraction of the uncorrected value in Fig. \ref{correctionfrac}, at a fixed distance of $15$ km. More precisely, for each value of excess noise $\xi$ and subspace dimension parameter $N$, we plot the fractional correction term $\Delta(W)/C_{num}$, where $W=(\eta\xi)^2/[2N(N+1)]$, while $C_{num}$ is computed at a fixed value of $N=40$ for each $\xi$. (In the range considered here, $C_{num}$ has a negligible dependence on $N$.) The protocol parameters are the same as in Fig. \ref{optKRPSN3040max}. We see that for a pure-loss channel, the correction term is zero since our subspace fully contains the simulation state, as discussed in Sec. \ref{dmcvqkdfiniteset}. For small values of excess noise, the correction term is negligible even for small $N$. For larger values of excess noise, and especially in the high loss regime, $N$ must be increased to obtain reasonable results. This is because the correction term scales like the loss $\eta$ for all values of excess noise $\xi$, while, for nonzero $\xi$, the key rate scales worse than $\eta$. It is an interesting avenue for future research to determine if a smaller correction term can be found.

\begin{figure}
\includegraphics[width=\linewidth]{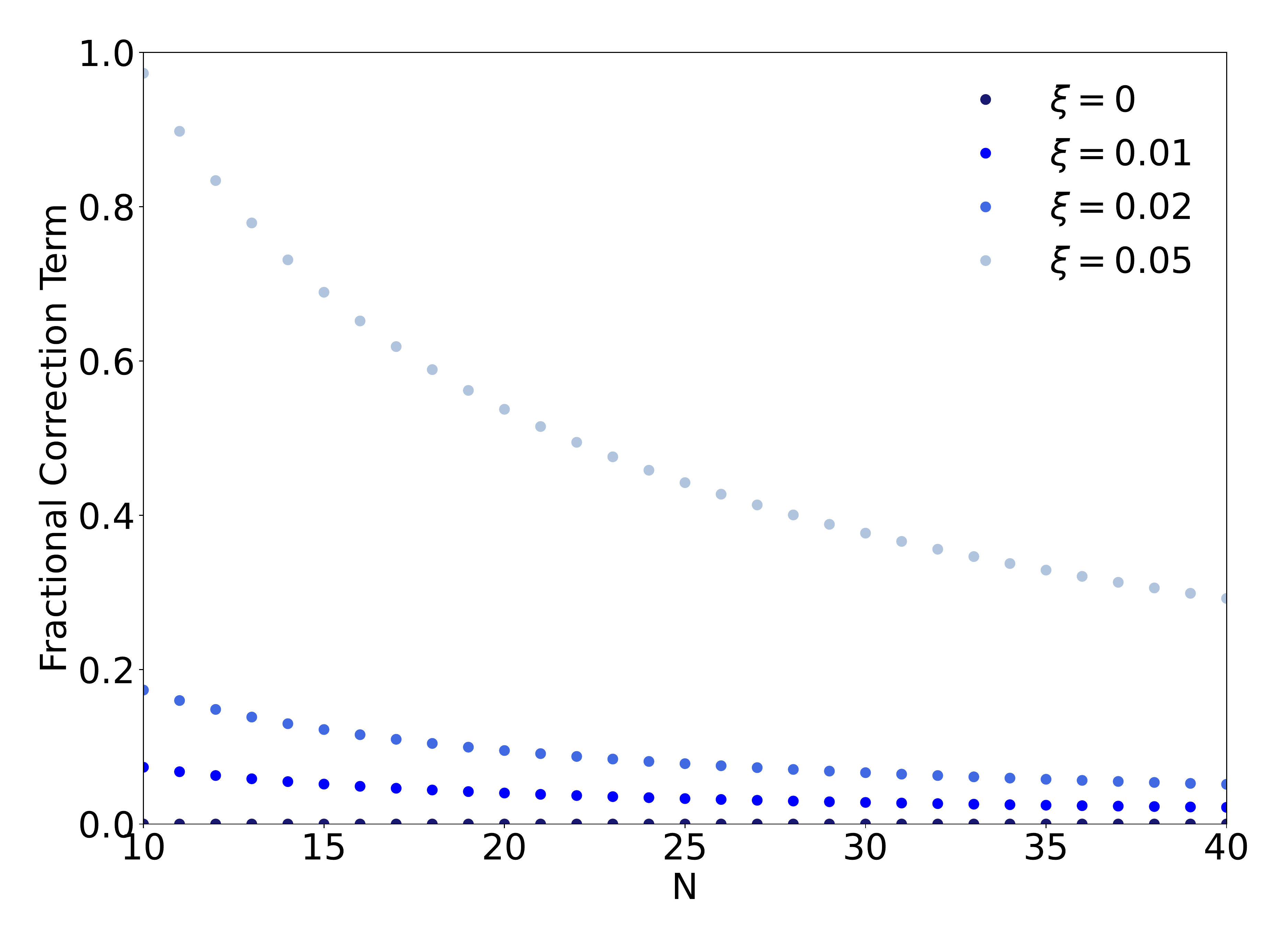}
\caption{The fractional correction term versus the subspace dimension parameter $N$ for different values of excess noise, in the ideal detector scenario. The distance is $15$ km and protocol parameters are optimized.}
\label{correctionfrac}
\end{figure}

The optimal signal state amplitudes $\alpha_{opt}$ are shown in Fig. \ref{optalphaN10}. The optimization range $[0.5,2]$ is sufficient for almost all parameter choices, though $\alpha_{opt}$ tends to infinity as distance and excess noise tend to zero. The general trend is that $\alpha_{opt}$ decreases as distance and excess noise increase. With $\alpha_{opt}$ fixed, $\Delta_a$ is optimized over $[0,1]$. (One could jointly optimize all protocol parameters, but we do not expect this to noticeably improve the key rates.) We find that the optimal value for $\Delta_p$ seems to always be zero, so phase postselection is omitted altogether. Thus, while the postselection pattern used here is a simple and intuitive one, it is an interesting future research topic to investigate other postselection patterns.

\begin{figure}
\centering
\includegraphics[width=\linewidth]{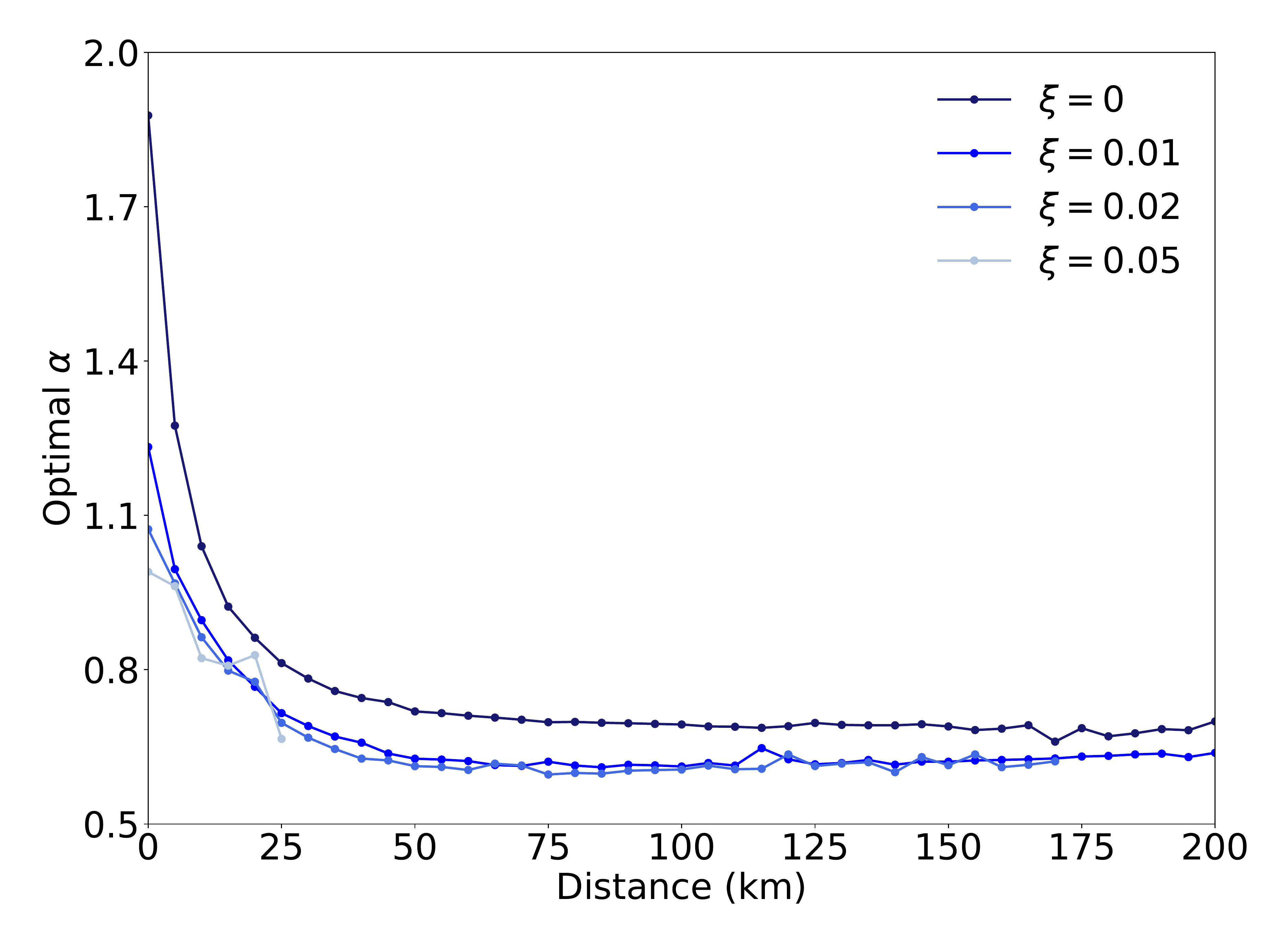}
\caption{Optimal signal state amplitude $\alpha_{opt}$ versus transmission distance, for different values of excess noise $\xi$ and in the ideal detector scenario. The amplitude is optimized in the range $[0.5,2]$, with $\Delta_a=\Delta_p=0$ and $N=10$.}
\label{optalphaN10}
\end{figure}

In Fig. \ref{TNfig}, we consider the key rates in the trusted detector noise scenario. We take the channel to have $1\%$ excess noise, and consider different values of detector efficiency and electronic noise. The protocol parameters are optimized for the ideal detector scenario, and the same parameters are used for each of the different trusted noise cases. We observe that even with large detector imperfections and $1$\% channel excess noise, it is possible to generate secure key at $200$ km. As expected, trusted detector noise does not significantly alter the scaling of the key rates. This is markedly different from the effect of channel excess noise (see Fig. \ref{optKRPSN3040max}).

\begin{figure}
\includegraphics[width=\linewidth]{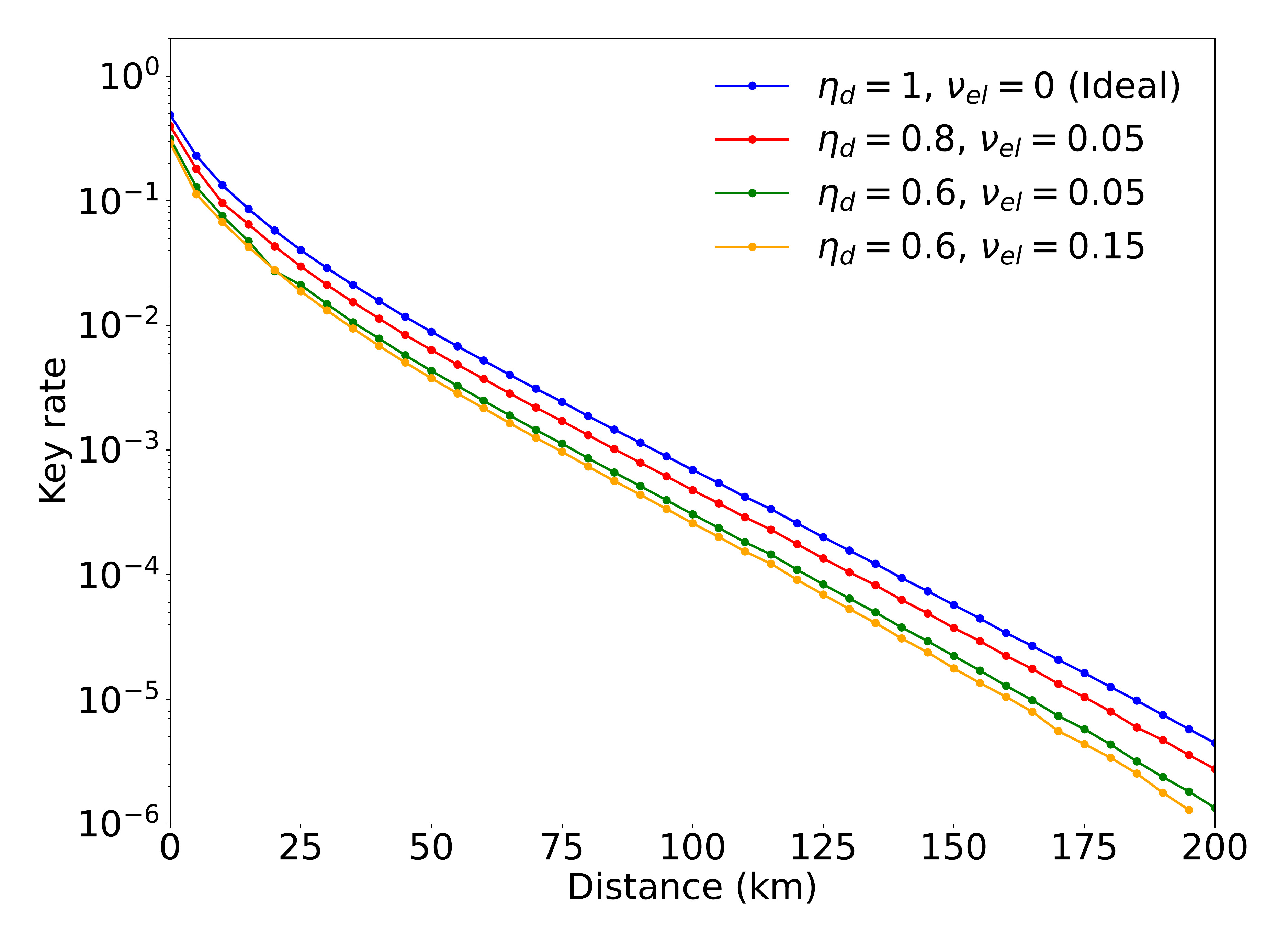}
\caption{Key rates versus distance for different trusted detector imperfections, with excess noise $\xi = 0.01$. Protocols are evaluated with the same optimized protocol parameters, including postselection, as in Fig. \ref{optKRPSN3040max}. The subspace dimension parameter is $N=10$.}
\label{TNfig}
\end{figure}

Discrete-modulated (DM) CVQKD is intended to be a more experimentally feasible alternative to Gaussian-modulated (GM) CVQKD \cite{Weedbrook2004,Fossier2009}. It is thus of interest to compare the performance of the two protocols. We perform a very basic comparison in Fig. \ref{GMDMcomparison}, using the same error-correction efficiency $\beta_{EC}=0.95$, detector loss $\eta_d=0.6$, and electronic noise $\nu_{el}=0.05$. The signal variance is optimized for GM, while the optimized protocol parameters for DM are the same as in Fig. \ref{TNfig}. We note that a complete and in-depth comparison of the two protocols would have to account for many more implementation details. For a pure-loss channel, the GM key rates are around an order of magnitude higher. At higher excess noise, the gap is larger as GM is more robust to channel noise. However, both protocols are largely unaffected by trusted detector imperfections. At $\xi=0.01$, a typical value for channel excess noise, the key rates scale similarly. Further, we expect the tolerance of DM to channel noise can be improved by using a larger constellation of signal states.

\begin{figure}
\includegraphics[width=\linewidth]{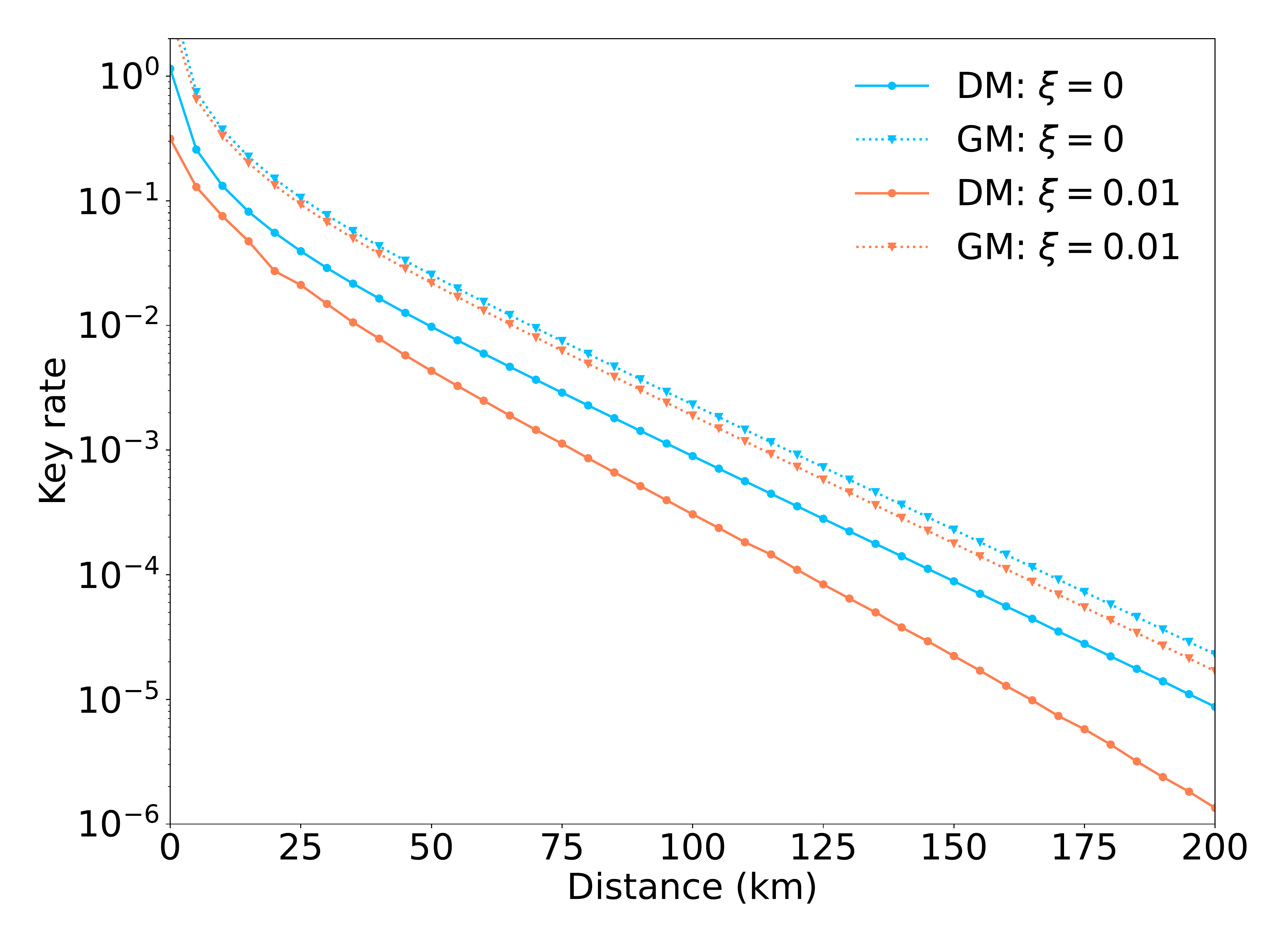}
\caption{Comparison of Gaussian \cite{Fossier2009} and discrete-modulation keyrates for different values of excess noise $\xi$, in the trusted noise scenario with $\eta_d=0.6$, $\nu_{el}=0.05$. Parameters for both protocols are optimized. The subspace dimension parameter is $N=10$.}
\label{GMDMcomparison}
\end{figure}

\section{Comparison to Flag-State Squasher}
\label{fss}
Our dimension reduction (DR) approach encompasses another method known as the flag-state squasher (FSS) \cite{Zhang2021}. The FSS also obtains lower bounds on the key rate by solving a finite-dimensional optimization. However, the FSS is restricted to protocols where both the key map POVM elements and constraint observables commute with the projection. Notably, this is not the case for DMCVQKD. In this section, we compare our method to the FSS both analytically and numerically. This demonstrates the advantages of our method and offers further insight into the FSS approach.

\subsection{Analytical Comparison}
We briefly summarize the FSS, deferring a complete description to \cite{Zhang2021}. As usual, Alice's POVM is given by $\{\dyad{j}_A\} $, while Bob's POVM is $\{\Gamma^i_B\}$. The corresponding probabilities are $\{\gamma_{ij}\}$. The FSS also requires choosing a projection $\Pi=\mathbbm{1}_A\otimes \Pi_B$ onto a finite subspace and upper-bounding the weight $W$ outside that subspace. It is assumed that $[\Gamma^i_B,\Pi_B]=0$.

Define a flag Hilbert space $\Hflag$, with dimension equal to the number of elements in Bob's POVM. The finite-dimensional optimization is over density matrices in $\Hfin\oplus\Hflag$. Bob's new POVM is $\tilde{\Gamma}^i=\Pi \ \Gamma^i\ \Pi \oplus \dyad{i}_F$. Alice's POVM and the expectation values are unchanged. The objective function is $f_{FSS}(\rhot_N\oplus\sigmat_F)=f(\rhot_N)$, i.e. it simply discards the flag portion and evaluates the usual key rate function on the remaining portion. This completes the formulation of the finite-dimensional optimization for the FSS.

The squashing map $\Lambda_B$ is a channel from $\Hinf$ to $\Hfin\oplus\Hflag$ defined as $\Lambda_B(\rho_\infty)=\Pi \rho_\infty \Pi \oplus \sum_i \Tr_B[\rho_\infty (\Gamma^i - \Pi \Gamma^i \Pi)] \dyad{i}_F$. Note that $\rho_\infty$ is feasible for the original infinite-dimensional optimization if and only if $\Lambda(\rho_\infty)$ is feasible for the finite-dimensional one. In this sense, we can think of the flag-state squasher as implicitly solving over the tightest possible choice of $\sfin$, namely $\Pi\sinf\Pi$. 

As a special case, our method can be applied to any protocol admitting a flag-state squasher. The FSS requires choosing a projection $\Pi$ and getting a bound on weight $W$, which establishes the first two steps of our method. Since the key map POVM elements commute with the projection, we can set $\Delta=0$ (Theorem \ref{nodelta}). All observables are POVM elements, so we can use the explicit form of $\sfin$ in Eq. \eqref{sfinoptim}. This establishes the last two steps of our method. We can now compare both approaches in the following theorem.

\begin{thm}
\label{fssdr}
For a fixed projection $\Pi$ and weight $W$, our dimension reduction (DR) method gives the same key rate as the flag-state squasher (FSS) when $\sfin=\Pi \sinf \Pi$.
\end{thm}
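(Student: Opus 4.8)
The plan is to reduce the claim to an equality between the two underlying finite-dimensional optimizations. Since the flag-state squasher is only defined when the key-map POVM elements commute with $\Pi$, Theorem~\ref{nodelta} applies and the dimension-reduction correction term vanishes, $\Delta(W)=0$; as both methods subtract the same error-correction cost $\delta^{leak}_{EC}$, it suffices to show that $\min_{\rhot\in\Pi\sinf\Pi} f(\rhot)$ equals the optimal value of the flag-state-squasher optimization. The latter is $\min_{\omega\in\sfin^{FSS}} f_{FSS}(\omega)=\min_{\omega\in\sfin^{FSS}} f(\rhot_N)$, where $\omega=\rhot_N\oplus\sigmat_F$ is the decomposition across $\Hfin\oplus\Hflag$ and $\sfin^{FSS}$ is the constraint-defined feasible set of \cite{Zhang2021}: states satisfying the reduced-state constraint $\tau_A$, the expectation constraints $\gamma_{ij}$ for the squashed POVM $\tilde{\Gamma}^i=\Pi\Gamma^i\Pi\oplus\dyad{i}_F$ together with Alice's unchanged POVM, and the flag weight being at most $W$. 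I would prove the two optimal values equal by a matching two-sided argument.

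For the inequality $R^\infty_{FSS}\le R^\infty_{DR}$, I would invoke the squashing channel $\Lambda_B$. The excerpt already records that $\Lambda_B$ maps every $\sigma\in\sinf$ into $\sfin^{FSS}$, and by construction the $\Hfin$-component of $\Lambda_B(\sigma)$ is $\Pi\sigma\Pi$, so $f_{FSS}(\Lambda_B(\sigma))=f(\Pi\sigma\Pi)$. Since $\Pi\sinf\Pi=\{\Pi\sigma\Pi:\sigma\in\sinf\}$, the image $\Lambda_B(\sinf)$ is a subset of $\sfin^{FSS}$ on which $f_{FSS}$ realizes exactly the values of $f$ on $\Pi\sinf\Pi$; minimizing over the possibly larger set $\sfin^{FSS}$ can only lower the optimum, giving the inequality.

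The reverse inequality $R^\infty_{DR}\le R^\infty_{FSS}$ is the crux: it amounts to showing that for every $\omega=\rhot_N\oplus\sigmat_F\in\sfin^{FSS}$ one has $\rhot_N\in\Pi\sinf\Pi$, equivalently that $\Lambda_B$ is onto $\sfin^{FSS}$. The idea is to \emph{re-inflate} the flag register back into the orthogonal complement: writing $\sigmat_F=\sum_i\sigmat_F^{(i)}\otimes\dyad{i}_F$ with $\sigmat_F^{(i)}\in\pos{\mathcal{H}_A}$, one looks for $\rho_\infty=\rhot_N\oplus Y$ with $Y\in\pos{\mathcal{H}_A\otimes\bar{\Pi}_B\mathcal{H}_B}$. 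Because $[\Gamma^i_B,\Pi_B]=0$, the operators $E^i:=\bar{\Pi}_B\Gamma^i_B\bar{\Pi}_B$ form a POVM on $\bar{\Pi}_B\mathcal{H}_B$, and a direct computation shows that $\rho_\infty$ lies in $\sinf$ (reproducing $\tau_A$, all $\gamma_{ij}$, and normalization) precisely when $Y$ satisfies $\Tr_B(Y)=\Tr_F(\sigmat_F)$, $\langle j|\Tr_B[Y(\mathbbm{1}_A\otimes E^i)]|j\rangle=\langle j|\sigmat_F^{(i)}|j\rangle$ for all $i,j$, and $\Tr(Y)=\Tr(\sigmat_F)$. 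Constructing such a $Y$ uses that $\bar{\Pi}_B\mathcal{H}_B$ is infinite-dimensional -- for instance via a purification-type dilation whose Alice-side Gram structure reproduces $\Tr_F(\sigmat_F)$ while a unitary on $\bar{\Pi}_B\mathcal{H}_B$ is tuned to distribute the POVM statistics across the outcomes $i$. The main obstacle is verifying that positivity of $Y$ and \emph{all} of these constraints can be met simultaneously, in particular reconciling the off-diagonal (in Alice's basis) reduced-state requirement with the diagonal $\gamma_{ij}$-requirements; this is exactly where the precise formulation of the flag-state squasher in \cite{Zhang2021} and the infinite-dimensionality of $\bar{\Pi}_B\mathcal{H}_B$ enter. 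Once a valid $\rho_\infty\in\sinf$ with $\Pi\rho_\infty\Pi=\rhot_N$ is exhibited, $\rhot_N\in\Pi\sinf\Pi$ so $f(\rhot_N)\ge\min_{\Pi\sinf\Pi}f$, and chaining this with the forward inequality yields equality of the optimal values, hence of the two key rates.
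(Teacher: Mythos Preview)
Your overall two-sided feasibility strategy is the paper's, and your easy direction ($\Lambda_B$ carries $\sinf$ into the FSS feasible set with matching objective values, hence $R^\infty_{FSS}\le R^\infty_{DR}$) is exactly the paper's ``Conversely'' step.

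The difference is in the reverse inequality. The paper does not attempt any re-inflation construction: it simply takes an FSS minimizer and \emph{writes} it as $\Lambda(\rho_\infty)$, then invokes the property recorded just before the theorem that $\rho_\infty\in\sinf$ if and only if $\Lambda(\rho_\infty)$ is FSS-feasible. From $\rhot_N=\Pi\rho_\infty\Pi$ it immediately gets $\rhot_N\in\Pi\sinf\Pi=\sfin$, yielding $R^\infty_{FSS}\ge R^\infty_{DR}$ in two lines. In other words, the paper treats the fact that every FSS-feasible state can be realized as $\Lambda(\rho_\infty)$ for some $\rho_\infty$ as part of the FSS setup imported from \cite{Zhang2021}, not as something to be re-derived here.

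Your route instead tries to establish this surjectivity by explicitly building a preimage $\rho_\infty=\rhot_N\oplus Y$ with $Y$ supported on $\bar{\Pi}_B\mathcal{H}_B$. This is where your proposal has a genuine gap: you correctly diagnose the obstruction (simultaneously matching the off-diagonal reduced-state constraint, all $\gamma_{ij}$, and positivity of $Y$) but do not overcome it. Phrases like ``purification-type dilation'' and ``a unitary tuned to distribute the POVM statistics'' are not a construction, and the existence of such a $Y$ is precisely what is at issue. Relative to the paper you are attempting considerably more work on a step the paper simply takes as given from the flag-state-squasher literature, and you have not completed that step.
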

\begin{proof}
Let $\Lambda(\rho_\infty)=\rhot_N \oplus \sigmat_F$ be a state reaching the minimum in the FSS optimization. By definition, $f_{FSS}(\rhot_N \oplus \sigmat_F)=f_{DR}(\rhot_N)$. By the definition of $\Lambda$, $\rhot_N=\Pi\rho_\infty\Pi$. Thus, $\rhot_N\in\sfin$, so is feasible for the dimension reduction optimization. Since this optimization is a minimization, $R^\infty_{FSS}\geq R^\infty_{DR}$. 

Conversely, let $\rhot_N$ be a subnormalized state reaching the minimum in the dimension reduction optimization. By the definition of $\sfin$, $\rhot_N=\Pi \rho_\infty \Pi$ for some state $\rho_\infty\in\sinf$. By definition, $f_{DR}(\rhot_N)=f_{FSS}(\Lambda(\rho_\infty))$. By the property of the squashing map, $\Lambda(\rho_\infty)$ is feasible for the FSS optimization. Since the FSS optimization is a minimization, $R^\infty_{DR} \geq R^\infty_{FSS}$.
\end{proof}

If $\sfin$ is not chosen optimally, then our dimension reduction method gives a lower key rate. In practice, our explicit prescription for choosing $\sfin$ in Eq. \eqref{sfinoptim} gives very similar key rates to the flag state squasher (see Sec. \ref{fssnumerics}), suggesting this choice is essentially optimal.

In addition to being more general, our method has an important advantage compared to the flag-state squasher. Our finite-dimensional optimization is over a smaller Hilbert space, since we do not require flag-state dimensions. Therefore, if we compare fixed \emph{total} dimension, which roughly determines the runtime, our method can give higher key rates than the flag-state squasher. Some protocols can have a very large number of POVM constraints. For the flag-state squasher, using all the constraints would make the runtime prohibitive, as the dimension of the problem depends on the number of constraints. Thus, a smaller set of coarse-grained POVM elements is typically used. Our dimension reduction method is not limited by the number of constraints and can thus handle the fine-grained POVM directly, potentially giving better key rates.

\subsection{Numerical Comparison: Unbalanced Phase-Encoded BB84}\label{fssnumerics}
Having provided an analytical comparison between our method and the flag-state squasher, we now perform a sample numerical comparison of the two methods. We will consider the unbalanced phase-encoded BB84 protocol, to which the flag state squasher has recently been applied \cite{Li2020}. We defer a complete description of the protocol to \cite{Li2020}. Briefly, this is a phase-encoded BB84 protocol where Alice and Bob's interferometers each have a loss $1-\kappa$ only in the arm with the phase modulator; hence the term unbalanced. The projection is $\Pi_B=\sum_{\substack{0\leq n_1, n_2 \\ n_1+n_2\leq N}} \dyad{n_1,n_2}$, where $\ket{n_1,n_2}$ are two-mode Fock basis states. The weight $W$ is bounded using the fact that the frequency of cross-clicks increases with photon number \cite{Li2020}. 

In Fig. \ref{bb84fig} we compare the key rates from our method and the flag-state squasher, for a channel with transmittance $\eta$ and for different interferometer asymmetric transmittance $\kappa$. All parameters are the same as in Figure 3(a) of \cite{Li2020}, and the signal state intensity is optimized separately for each method and parameter choice. We see that our method gives essentially identical key rates. In conjunction with Theorem \ref{fssdr}, this provides strong numerical evidence that our heuristic choice of $\sfin$ in Eq. \eqref{sfinoptim} is tight. While a more thorough benchmarking would be in order, we remark that for generating the data in Fig. \ref{bb84fig}, our method was approximately five times faster than the flag-state squasher as implemented in \cite{Li2020} (using the same SDPT3 solver \cite{Toh1999,Tutuncu2003}).

\begin{figure}
\includegraphics[width=\linewidth]{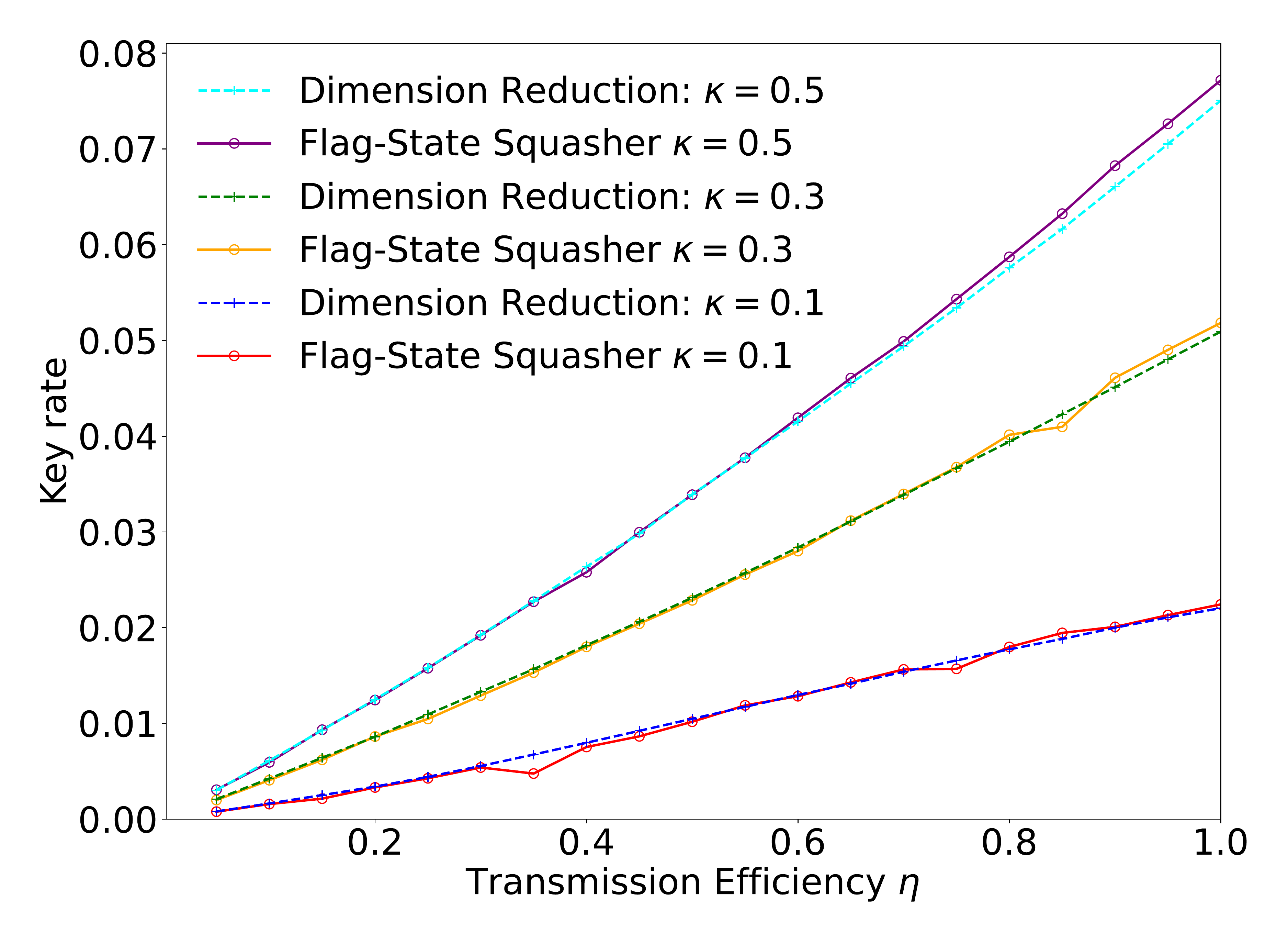}
\caption{Key rates for unbalanced phase-encoded BB84, versus transmission efficiency $\eta$, for different values of asymmetric interferometer loss $1-\kappa$. It is clear that the key rates from our dimension reduction method are nearly identical to those from the flag-state squasher, indicating the tightness of our method in practice. In generating the data for this graph, our dimension reduction method was approximately five times faster than the flag-state squasher as implemented in \cite{Li2020}.}
\label{bb84fig} 
\end{figure}

\section{Conclusion}\label{conclusion}
In summary, we establish a framework to lower bound a large dimensional convex optimization using a judiciously chosen smaller dimensional one. We show how this framework can be used to reduce the dimension of QKD key rate calculations. This allows existing numerical tools for finite-dimensional key rate calculations to be applied to protocols in infinite-dimensional Hilbert spaces. This allows us to do more detailed modelling of imperfections in devices. An important application of our method is to prove the asymptotic security of DMCVQKD with an arbitrary number of modulated states. As a concrete example, we apply this method to the quadrature phase-shift keying scheme in both ideal and trusted detector noise scenarios. We show that discrete modulation key rates can scale similarly to Gaussian modulation. Moreover, we rigorously demonstrate that postselection of data can improve the key rates for DMCVQKD. Using unbalanced phase-encoded BB84 as an example, we show that our approach can achieve key rates nearly identical to those from the flag-state squasher, while having an improved runtime.  

Some directions for future work are as follows. One may be able to make the correction term for the QKD objective function smaller by using the commutation relations of the POVM elements and the projection. Qualitatively, this would interpolate between the two cases we considered. There may also be tighter ways to construct the finite set. This could involve using additional properties of the constraint operators in specific cases. For DMCVQKD in particular, one may consider the effects of using more modulated states and different postselection regions in order to increase the key rate. Given the recent development of a finite-key numerical framework \cite{George2021}, we hope to extend our dimension reduction method to finite-key analysis of protocols in infinite-dimensional Hilbert spaces. We expect that key elements of the method, including bounding the weight outside the subspace and expanding the feasible set, will lift to the finite-key analysis. 

\begin{acknowledgments}
The Institute for Quantum Computing is supported in part by Innovation, Science, and Economic Development Canada. T.U. acknowledges the support of NSERC through the Alexander Graham Bell Canada Graduate Scholarship. T.V.H. acknowledges funding from Mitacs through the Mitacs Accelerate program grant. This research has been supported by NSERC under the Discovery Grants Program, Grant No. 341495, and under the Collaborative Research and Development Program, Grant No. CRDP J 522308-17. Financial support for this work has been partially provided by Huawei Technologies Canada Co., Ltd.
\end{acknowledgments}

\bibliographystyle{../modrevtex}
\bibliography{../mainbib.bib}

\appendix
\onecolumngrid
\section{Uniform Continuity Bound \label{ctybd}}
Here we prove an extension of Lemma 2 in \cite{Winter2016} to subnormalized states. Our development closely parallels that result. Although we are only interested in showing the conditional entropy is uniformly close to decreasing under projection with correction $\Delta$, we will effectively have to derive uniform continuity to determine $\Delta$; so for completeness we give the overall uniform continuity bound as well. Note that the correction term in Eq. \eqref{ucdupcq} is smaller than Eq. \eqref{ctybdcq}.

\begin{thm}[Uniform Continuity and UCDUP of Conditional Entropy]
\label{uc}
Let $\mathcal{H}_A$ and $\mathcal{H}_B$ be two Hilbert spaces where the dimension of $\mathcal{H}_A$ is $\abs{A}$ while $\mathcal{H}_B$ can be infinite-dimensional. Let $\rhot_{AB}, \sigmat_{AB}\in\tilde{D}( \mathcal{H}_A \otimes  \mathcal{H}_B)$ be two subnormalized states; we will omit the system subscripts for readability. WLOG, suppose $\Tr(\rhot)\geq\Tr(\sigmat)$. Let $\frac{1}{2}\norm{\rhot-\sigmat}_1\leq\epsilon\leq1$, $\frac{1}{2}\Tr(\rhot-\sigmat)=\delta$ and $\frac{1}{2}\Tr(\rhot+\sigmat)=a$. Let $\ee=\e+\delta$ and $\eee=\e-\delta$. Then, it holds that
\begin{equation}
\label{ctybdgeneral}
\abs{H(A|B)_{\rhot} -  H(A|B)_{\sigmat}}\leq 2 \e \log_2\abs{A} + (a+\e)\max\left\{h\left(\frac{\ee}{a+\e}\right),h\left(\frac{\eee}{a+\e}\right) \right\}. 
\end{equation}

If $\rhot$ and $\sigmat$ are classical-quantum states, that is $\rhot=\sum_{i=1}^\abs{A} \dyad{i}_A \otimes \rhot^i_B$ and $\sigmat=\sum_{i=1}^\abs{A}\dyad{i}_A \otimes \sigmat^i_B$, then
\begin{equation}
\label{ctybdcq}
\abs{H(A|B)_{\rhot} -  H(A|B)_{\sigmat}}\leq  \ee \log_2\abs{A} + (a+\e)\max\left\{h\left(\frac{\ee}{a+\e}\right),h\left(\frac{\eee}{a+\e}\right) \right\},
\end{equation}
and
\begin{equation}
\label{ucdupcq}
H(A|B)_{\sigmat} -  H(A|B)_{\rhot} \leq  \eee \log_2\abs{A} +(a+\e)h\left(\frac{\eee}{a+\e}\right).
\end{equation}
\end{thm}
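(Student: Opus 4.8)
The plan is to adapt the proof of Lemma~2 in \cite{Winter2016} to subnormalized operators, tracking normalizations carefully. First I fix the convention that for $\omega\geq 0$ with $t\equiv\Tr(\omega)$ the conditional entropy is $H(A|B)_\omega=S(\omega_{AB})-S(\omega_B)$ with $S(X)=-\Tr(X\log_2 X)$; since $S(t\hat{\omega})=tS(\hat{\omega})-t\log_2 t$ on both $AB$ and $B$ and the $-t\log_2 t$ terms cancel, one has $H(A|B)_\omega=t\,H(A|B)_{\omega/t}$. Hence the facts I need for normalized states --- concavity of $H(A|B)$; the mixing bound $H(A|B)_{\sum_i p_i\omega_i}\leq\sum_i p_i H(A|B)_{\omega_i}+H(\{p_i\})$ (itself a consequence of strong subadditivity together with $H(X|AB)\geq 0$ for a classical register $X$); and $\abs{H(A|B)}\leq\log_2\abs{A}$, sharpened to $H(A|B)\geq 0$ for classical-quantum states --- lift to the trace-weighted statements used below, e.g.\ $\abs{H(A|B)_\omega}\leq\Tr(\omega)\log_2\abs{A}$, and $H(A|B)_\omega\geq 0$ when $\omega$ is a subnormalized classical-quantum operator.

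Next I introduce the Jordan decomposition $\rhot-\sigmat=\Delta_+-\Delta_-$ with $\Delta_\pm\geq 0$ and $\Delta_+\Delta_-=0$. Writing $\tilde{\e}\equiv\tfrac12\norm{\rhot-\sigmat}_1\leq\e$ for the true trace distance, one computes $\Tr(\Delta_+)-\Tr(\Delta_-)=2\delta$ and $\Tr(\Delta_+)+\Tr(\Delta_-)=2\tilde{\e}$, so $\Tr(\Delta_\pm)=\tilde{\e}\pm\delta$; note $\tilde{\e}\geq\delta\geq 0$ and $a\geq\delta$ since $\Tr(\sigmat)=a-\delta\geq 0$. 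The operator identity $\rhot+\Delta_-=\sigmat+\Delta_+$ then defines the normalized state $\omega\equiv(\rhot+\Delta_-)/T$ with $T\equiv a+\tilde{\e}$, which has the two convex decompositions
\[
\omega=\tfrac{\Tr\rhot}{T}\,\tfrac{\rhot}{\Tr\rhot}+\tfrac{\Tr\Delta_-}{T}\,\tfrac{\Delta_-}{\Tr\Delta_-}=\tfrac{\Tr\sigmat}{T}\,\tfrac{\sigmat}{\Tr\sigmat}+\tfrac{\Tr\Delta_+}{T}\,\tfrac{\Delta_+}{\Tr\Delta_+}.
\]
Applying the weighted concavity bound to one decomposition and the weighted mixing bound to the other (so that the mixing penalty $Th(\,\cdot\,)$ becomes $(a+\tilde{\e})\,h\!\left(\tfrac{\Tr(\Delta_\mp)}{a+\tilde{\e}}\right)$), and then discarding the $H(A|B)_{\Delta_\pm}$ terms via $\abs{H(A|B)_{\Delta_\pm}}\leq\Tr(\Delta_\pm)\log_2\abs{A}$ in the general case, or via $H(A|B)_{\Delta_\pm}\geq 0$ in the classical-quantum case (where $\Delta_\pm$ inherit block-diagonality in $A$), yields, for both orderings of $\rhot,\sigmat$,
\begin{align*}
H(A|B)_{\sigmat}-H(A|B)_{\rhot}&\leq c_-\log_2\abs{A}+(a+\tilde{\e})\,h\!\left(\tfrac{\tilde{\e}-\delta}{a+\tilde{\e}}\right),\\
H(A|B)_{\rhot}-H(A|B)_{\sigmat}&\leq c_+\log_2\abs{A}+(a+\tilde{\e})\,h\!\left(\tfrac{\tilde{\e}+\delta}{a+\tilde{\e}}\right),
\end{align*}
with $(c_-,c_+)=(2\tilde{\e},2\tilde{\e})$ in the general case and $(c_-,c_+)=(\tilde{\e}-\delta,\tilde{\e}+\delta)$ in the classical-quantum case.

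It remains to replace the true distance $\tilde{\e}$ by its upper bound $\e$. For this I would prove the elementary monotonicity lemma that $x\mapsto(a+x)\,h\!\left(\tfrac{x\mp\delta}{a+x}\right)$ are nondecreasing on their domains, using the simplification $(a+x)\,h\!\left(\tfrac{x-\delta}{a+x}\right)=(a+x)\log_2(a+x)-(x-\delta)\log_2(x-\delta)-(a+\delta)\log_2(a+\delta)$, whose $x$-derivative is $\log_2\tfrac{a+x}{x-\delta}>0$, and analogously for the $+\delta$ version, where monotonicity uses $a\geq\delta$. Since $\tilde{\e}\leq\e$ (hence $\tilde{\e}-\delta\leq\eee$ and $\tilde{\e}+\delta\leq\ee$), the one-sided bounds upgrade to the claimed ones; combining the two directions, and in the classical-quantum case using $\eee\leq\ee$ to absorb the $\log_2\abs{A}$ prefactors into $\ee$, gives Eqs.~\eqref{ctybdgeneral} and~\eqref{ctybdcq}, while the single inequality $H(A|B)_{\sigmat}-H(A|B)_{\rhot}\leq\dots$ is exactly Eq.~\eqref{ucdupcq}.

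The main obstacle I anticipate is bookkeeping rather than a genuinely new idea: ensuring each normalized-state inequality is applied with the correct trace weight --- which rests on the identity $H(A|B)_\omega=\Tr(\omega)\,H(A|B)_{\omega/\Tr(\omega)}$ and on $H(X|AB)\geq 0$ for a classical register --- and proving the monotonicity lemma in a form tight enough that passing from $\tilde{\e}$ to $\e$ costs nothing, which is precisely what allows the final bound to be phrased in terms of the upper bound $\e$ rather than the actual trace distance. Everything else parallels the normalized-state argument of \cite{Winter2016} step for step.
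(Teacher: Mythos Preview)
Your proposal is correct and follows essentially the same route as the paper: both construct the interpolating state $\omega$ with its two convex decompositions from the Jordan decomposition of $\rhot-\sigmat$, apply concavity for the lower bound and a mixing-type inequality for the upper bound, and then bound the auxiliary conditional entropies of $\Delta_\pm$. The only cosmetic differences are that the paper normalizes its $\Delta,\Delta'$ and obtains the upper bound via the relative-entropy identity $-H(A|B)_\omega=\min_{\xi_B}D(\omega_{AB}\Vert\mathbbm{1}_A\otimes\xi_B)$ rather than invoking the conditional-entropy mixing bound directly, and it compresses your monotonicity lemma into the one-line ``WLOG $\tfrac12\norm{\rhot-\sigmat}_1=\e$ since the bound is increasing in $\e$.''
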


\begin{proof}

We can assume $\frac{1}{2}\norm{\rhot-\sigmat}_1=\epsilon$ since our bound will be increasing in $\epsilon$. Note that $\delta\leq\epsilon$. As usual, $\rho$ and $\sigma$ denote the normalized $\rhot$ and $\sigmat$. Let $\cdot_+$ denote the positive part of a Hermitian operator. The proof consists of a series of operator inequalities and applications of strong subadditivity. 

We first determine the trace of the positive and negative parts of $\rhot-\sigmat$. To do this, consider the eigenvalues $\lambda_i$ of $\rhot-\sigmat$. By assumption, $ \sum \abs{\lambda_i}=\norm{\rhot-\sigmat}_1 =2\epsilon$ and $ \sum \lambda_i=\Tr(\rhot-\sigmat)=2\delta$. Thus, $\Tr[(\rhot-\sigmat)_+]=\sum_{\lambda_i\geq0} \lambda_i= \epsilon + \delta=\epsilon'$. Similarly, $\Tr[(\rhot-\sigmat)_-]=-\sum_{\lambda_i<0} \lambda_i= \epsilon - \delta=\epsilon''$.

Thus, $\frac{1}{\ee} (\rhot-\sigmat)_+$ and $\frac{1}{\eee} (\rhot-\sigmat)_-$ are normalized states. Denote them by $\Delta$ and $\Delta'$ respectively. After some rearrangement, we can define a third state $\omega$ satisfying
\begin{equation}
\label{omegasum}
\omega=\frac{\Tr\sigmat}{\Tr\sigmat+\ee}\sigma + \frac{\ee}{\Tr\sigmat+\ee} \Delta =\frac{\Tr\rhot}{\Tr\rhot+\eee}\rho + \frac{\eee}{\Tr\rhot+\eee} \Delta'.
\end{equation}
Note that $\Tr\sigmat+\ee=\Tr\rhot+\eee=a+\e$. We will find an upper and lower bound on $H(A|B)_{\omega}$, and combine them to get our final result.

The lower bound simply follows from the concavity of conditional entropy and the definition of $\omega$ in Eq. \eqref{omegasum},
\begin{equation}
\label{omegaineq2}
H(A|B)_\omega\geq \frac{\Tr\sigmat}{a+\e}H(A|B)_{\sigma}  +  \frac{\ee}{a+\e} H(A|B)_{\Delta}.
\end{equation}

For the upper bound, we first rewrite the conditional entropy in terms of the relative entropy as follows \cite{Ohya2004},
\begin{equation}
\label{relcondentr}
-H(A|B)_{\omega_{AB}}=\min_{\xi_B} D(\omega_{AB}||\mathbbm{1}_A\otimes \xi_B).
\end{equation}
Note that the minimum is achieved at $\xi_B=\omega_B=\Tr_A(\omega_{AB})$. Expanding the definition of the relative entropy, we have
\begin{align}
H(A|B)_\omega&=-D(\omega_{AB}||\mathbbm{1}_A\otimes \omega_B)\\
&=H(\omega)+\Tr[\omega (\mathbbm{1}_A \otimes \log_2\omega_B)].
\end{align}
We upper bound the first term using strong subadditivity, 
\begin{align}
H(\omega)&=H\left(\frac{\Tr \rhot}{a+\e} \rho + \frac{\eee}{a+\e}\Delta'\right)\\
&\leq  \frac{\Tr \rhot}{a+\e}H(\rho)+\frac{\eee}{a+\e}H(\Delta') +  h\left(\frac{\eee}{a+\e}\right).
\end{align}
In the second term, we simply insert the definition of $\omega$ and expand. Thus, we have
\begin{align}
H(A|B)_\omega&\leq  \frac{\Tr \rhot}{a+\e}H(\rho)+\frac{\eee}{a+\e}H(\Delta')+ \frac{\Tr \rhot}{a+\e} \Tr[\rho (\mathbbm{1}_A \otimes \log_2\omega_B)]+\frac{\eee}{a+\e} \Tr[\Delta' (\mathbbm{1}_A \otimes \log_2\omega_B)] + h\left(\frac{\eee}{a+\e}\right) \\
&=- \frac{\Tr \rhot}{a+\e} D(\rho||\mathbbm{1}_A\otimes \omega_B)  - \frac{\eee}{a+\e}D(\Delta'||\mathbbm{1}_A\otimes \omega_B) + h\left(\frac{\eee}{a+\e}\right) ,
\end{align}
where we have recombined the terms into relative entropies. We now use the relation in Eq. \eqref{relcondentr} again, to obtain
\begin{equation}
H(A|B)_\omega\leq \frac{\Tr \rhot}{a+\e} H(A|B)_{\rho}  +  \frac{\eee}{a+\e}H(A|B)_{\Delta'} + h\left(\frac{\eee}{a+\e}\right).
\label{omegaineq1}
\end{equation}

The upper and lower bounds on $H(A|B)_\omega$, in Eq. \eqref{omegaineq1} and Eq. \eqref{omegaineq2} respectively, can be combined to obtain
\begin{align}
\frac{\Tr\sigmat}{a+\e}H(A|B)_{\sigma} + \frac{\ee}{a+\e} H(A|B)_{\Delta} &\leq \frac{\Tr \rhot}{a+\e} H(A|B)_{\rho} + \frac{\eee}{a+\e}H(A|B)_{\Delta'}+h\left(\frac{\eee}{a+\e}\right),\\
H(A|B)_{\sigmat} -  H(A|B)_{\rhot} &\leq  \eee H(A|B)_{\Delta'} -  {\ee} H(A|B)_{\Delta}+(a+\e)h\left(\frac{\eee}{a+\e}\right).
\end{align}

By repeating the proof but interchanging the two expressions for $\omega$, we similarly obtain 
\begin{equation}
H(A|B)_{\rhot} -  H(A|B)_{\sigmat} \leq  \ee H(A|B)_{\Delta} -  {\eee} H(A|B)_{\Delta'}+(a+\e)h\left(\frac{\ee}{a+\e}\right).
\end{equation}

Conditional entropies of normalized states are bounded between $\pm\log_2\abs{A}$. Thus, we have
\begin{equation}
\abs{H(A|B)_{\rhot} -  H(A|B)_{\sigmat}}\leq 2 \e \log_2\abs{A} + (a+\e)\max\left\{h\left(\frac{\ee}{a+\e}\right),h\left(\frac{\eee}{a+\e}\right) \right\}. 
\end{equation}

When $\rhot$ and $\sigmat$ are both classical-quantum states, $\Delta$ and $\Delta'$ are also both classical-quantum states. Then, their conditional entropy is between 0 and $\log_2\abs{A}$. This gives the tighter bound of 
\begin{equation}
\abs{H(A|B)_{\rhot} -  H(A|B)_{\sigmat}}\leq  \ee \log_2\abs{A} + (a+\e)\max\left\{h\left(\frac{\ee}{a+\e}\right),h\left(\frac{\eee}{a+\e}\right) \right\}. 
\end{equation}
Similarly, 
\begin{equation}
H(A|B)_{\sigmat} -  H(A|B)_{\rhot} \leq  \eee \log_2\abs{A} +(a+\e)h\left(\frac{\eee}{a+\e}\right).
\end{equation}
\end{proof}

\begin{cor}
Let $\rhot_{AB}$ and $\sigmat_{AB}$ be two bipartite subnormalized classical-quantum states with $\Tr(\rhot)\geq\Tr(\sigmat)$; the dimension of system B can be infinite. Let $\frac{1}{2}\norm{\rhot-\sigmat}_1\leq\epsilon\leq1$. Then, 
\begin{equation}
H(A|B)_{\sigmat} -  H(A|B)_{\rhot} \leq  \e \log_2\abs{A} +(1+\e)h\left(\frac{\e}{1+\e}\right).
\end{equation}
\end{cor}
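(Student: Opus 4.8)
The plan is to obtain this corollary as a direct specialization of Eq.~\eqref{ucdupcq} in Theorem~\ref{uc}. Using the notation of that theorem, write $a=\tfrac12\Tr(\rhot+\sigmat)$ and $\delta=\tfrac12\Tr(\rhot-\sigmat)$, so that $\eee=\e-\delta$. Since $\rhot$ and $\sigmat$ are subnormalized, $\Tr(\rhot)=a+\delta\le 1$ and $\Tr(\sigmat)=a-\delta\ge 0$, whence $0\le\delta\le a\le 1-\delta\le 1$; moreover $2\delta=\Tr(\rhot-\sigmat)\le\norm{\rhot-\sigmat}_1\le 2\e$, so $0\le\delta\le\e\le 1$. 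In particular $\eee\le\e$, so the first term on the right-hand side of Eq.~\eqref{ucdupcq} already satisfies $\eee\log_2\abs{A}\le\e\log_2\abs{A}$, and it remains only to show
\[
(a+\e)\,h\!\left(\tfrac{\e-\delta}{a+\e}\right)\ \le\ (1+\e)\,h\!\left(\tfrac{\e}{1+\e}\right).
\]
(If $\e-\delta=0$ the left-hand side vanishes and there is nothing to prove, so assume $\e-\delta>0$.)

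The first step I would take is to fix $\delta$ and maximize over $a$. For fixed $c=\e-\delta>0$, the one-variable function $s\mapsto s\,h(c/s)$ on $s\ge c$ has derivative $h(c/s)-\tfrac{c}{s}h'(c/s)=-\log_2(1-c/s)\ge 0$, where I use the elementary identity $h(p)-p\,h'(p)=-\log_2(1-p)$ for the binary entropy. Hence the left-hand side above is nondecreasing in $a$, and since $a\le 1-\delta$ we may replace $a$ by $1-\delta$, which reduces the claim to
\[
(1+\e-\delta)\,h\!\left(\tfrac{\e-\delta}{1+\e-\delta}\right)\ \le\ (1+\e)\,h\!\left(\tfrac{\e}{1+\e}\right).
\]

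The second step is to eliminate $\delta$. Setting $m=1+\e-\delta$ and using the symmetry $h(p)=h(1-p)$ with $p=\tfrac{\e-\delta}{1+\e-\delta}=1-\tfrac1m$, the left-hand side equals exactly $m\,h(1/m)$. The constraints $\delta\le a\le 1-\delta$ force $\delta\le\tfrac12$, so $m\in[1,1+\e]$, and the same derivative computation (now with $c=1$) shows $m\mapsto m\,h(1/m)$ is nondecreasing on $m\ge1$; it is therefore maximized at $m=1+\e$, i.e. at $\delta=0$, giving $(1+\e)h\!\left(\tfrac{1}{1+\e}\right)=(1+\e)h\!\left(\tfrac{\e}{1+\e}\right)$. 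Combining this with the bound on the $\log_2\abs{A}$ term yields the stated inequality. The only nontrivial ingredient is the monotonicity of $s\mapsto s\,h(c/s)$, which rests on the identity $h(p)-p\,h'(p)=-\log_2(1-p)$; everything else is bookkeeping of the trace constraints, and the degenerate cases are immediate.
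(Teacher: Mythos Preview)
Your proof is correct and follows essentially the same approach as the paper: start from Eq.~\eqref{ucdupcq}, bound the $\log_2\abs{A}$ term trivially, and handle the binary-entropy term via the monotonicity of $s\mapsto s\,h(c/s)$. The only cosmetic difference is that the paper pushes $a$ all the way to $1$ in one step and then replaces $\eee$ by $\e$ using the monotonicity of $h$ on $[0,\tfrac12]$, whereas you first tighten to $a=1-\delta$ and then reuse the same $s\,h(c/s)$ monotonicity (with $c=1$) to eliminate $\delta$; both arrive at the same bound by the same mechanism.
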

\begin{proof}
Begin with the third statement of Theorem \ref{uc}. We can upper bound $\eee$ in the first term on the right-hand side by $\e$. Then, since the function $g(a)=(a+\e) h\left(\frac{c}{a+\e}\right)$ is increasing on $a\in[0,1]$, we can upper bound the second term on the right-hand side by evaluating it at $a=1$. We have
\begin{gather}
\frac{\eee}{1+\e} \leq \frac{\e}{1+\e} \leq \frac{1}{2}.
\end{gather}
Since the binary entropy is increasing on $[0,\frac{1}{2}]$,
\begin{equation}
h\left(\frac{\eee}{1+\e}\right)  \leq h\left(\frac{\e}{1+\e}\right).
\end{equation}
Thus we can replace $\eee$ with $\e$ in the second term as well. This leaves us with the desired expression.
\end{proof}

\section{Matrix Operations in Displaced Basis}\label{matrixops}
Recall our basis is $\{\ket{i}_A\otimes\ket{n_{\beta_i}}_B\}$. We calculate the matrix elements of certain operators in this basis and evaluate the action of relevant channels.
\subsection{Operators}
Our constraint operators take a particularly simple form in the displaced basis. The matrix elements are
\begin{align}
\bra{i}\bra{m_{\beta_i}} \left(\dyad{k} \otimes \obsnk \right)  \ket{j} \ket{n_{\beta_j}}&=\delta_{ik}\delta_{jk} \bra{m_{\beta_k}}  \obsnk \ket{n_{\beta_k}}\\
&=\delta_{ik}\delta_{jk} \delta_{mn} n.
\end{align}
Similarly, for $\obsnsqk$, they are  $\delta_{ik}\delta_{jk} \delta_{mn} n^2$.
The key map POVMs are more complicated. Recall the POVM elements (Eq. \eqref{dmcvqkdmap}) are
\begin{equation}
P^k=\mathbbm{1}_A \otimes R^k_B,
\end{equation}
where $R^k_B$ are the region operators for the non-discarded signals.
The matrix elements are 
\begin{align}
P^k_{ijmn}&=\bra{i}\bra{m_{\beta_i}}\left(\mathbbm{1}_A \otimes R_B^k \right) \ket{j} \ket{n_{\beta_j}}\\
&=\delta_{ij} \bra{m_{\beta_i}} R_B^k \ket{n_{\beta_j}}\\
&=\bra{m_{\beta_i}} R_B^k \ket{n_{\beta_i}} \label{idealelement} \\
&=\frac{1}{\pi} \int_{\Delta_a}^\infty \int_{\frac{(2k-1)\pi}{4}+\Delta_p}^{\frac{(2k+1)\pi}{4}-\Delta_p} r e^{-\abs{\kappa}^2} \frac{\kappa^m \kappa^{*n}}{\sqrt{m!n!}} d\theta dr,
\end{align}
where $\kappa=r e^{i\theta}-\beta_i$. This integral is computed in \textsc{Matlab}. 

\subsection{Channels}
Our basis for the bipartite Hilbert space is not of the form $\ket{i}_A \otimes \ket{j}_B$, where $\ket{i}_A$ and $\ket{j}_B$ are bases for $\mathcal{H}_A$ and $\mathcal{H}_B$ respectively. Matrix multiplication proceeds as normal, since we simply have some orthonormal basis. However, operations that care about subsystems, namely the partial trace and its adjoint, have a different matrix representation than the typical presentation. We have 
\begin{equation}
\rho_{AB}= \sum_{i, j, m, n} c_{ijmn} \dyad{i}{j} \otimes \dyad{m_{\beta_i}}{n_{\beta_j}},
\end{equation}
where the coefficients $c$ are the matrix elements of $\rho$. We denote this matrix by $M_\rho$;
\begin{equation}
M_\rho=\sum_{i, j, m, n} c_{ijmn} \dyad{i}{j} \otimes \dyad{m}{n}.
\end{equation}
The reduced density matrix is
\begin{equation}
\rho_{A}= \sum_{i, j, m, n} c_{ijmn} \dyad{i}{j} \braket{n_{\beta_j}}{m_{\beta_i}}.
\end{equation}
Defining
\begin{equation}
G= \sum_{i, j, m, n}\braket{n_{\beta_j}}{m_{\beta_i}}  \dyad{i}{j} \otimes \dyad{m_{\beta_i}}{n_{\beta_j}},
\end{equation}
we have that 
\begin{equation}
\braket{i}{\rho_A|j}=\rho_{ij} \odot G_{ij}
\end{equation}
where the subscripts on the bipartite operators indicate the respective block matrix, and $\odot$ is the element-wise dot product. Note that each $G_{ij}$ can be thought of as a basis change unitary in $\mathcal{H}_B$. An explicit formula for the elements of $G$ is
\begin{align}
\braket{n_{\beta_j}}{m_{\beta_i}} &= \bra{n}D^\dagger(\beta_j) D(\beta_i) \ket{m}\\
&=\exp(i \Im(-\beta_j\beta_i^*)) \bra{n}D(\beta_i-\beta_j) \ket{m}\\
\begin{split}
&=\exp(i \Im(-\beta_j\beta_i^*)-\frac{\abs{\beta_i-\beta_j}^2}{2}) \sqrt{m!n!}  \  \sum_{k=0}^{\min(m,n)} \frac{1}{k!(m-k)!(n-k)!} (\beta_i-\beta_j)^{n-k} (\beta_j^*-\beta_i^*)^{m-k}.
\end{split}
\end{align}
We compute and store this matrix once at the beginning of the optimization algorithm, and use it each time to calculate the partial trace.

The adjoint of the partial trace also has a matrix representation involving $G$. The adjoint of the partial trace is $\xi(\sigma_A)=\sigma_A \otimes \mathbbm{1}_B$. Letting
\begin{equation}
\sigma_A = \sum_{ij} c_{ij} \dyad{i}{j},
\end{equation}
we seek $d_{ijmn}$ such that 
\begin{equation}
\sigma_A \otimes \mathbbm{1}_B= \sum_{ijmn} d_{ijmn}  \dyad{i}{j} \otimes \dyad{m_{\beta_i}}{n_{\beta_j}}.
\end{equation}
This implies 
\begin{equation}
\sum_{mn} d_{ijmn} \dyad{m_{\beta_i}}{n_{\beta_j}} = c_{ij} \mathbbm{1}_B \quad \forall i,j.
\end{equation}
Taking the bra-ket on both sides, we obtain $d_{ijmn}=c_{ij} \braket{m_{\beta_i}}{n_{\beta_j}}$. We recognize the factor on the right-hand side as $G^*$. Thus, we have that 
\begin{equation}
\xi(\sigma_A)=\sum_{ijmn} c_{ij} \dyad{i}{j} \otimes G^*_{ij}. 
\end{equation}

\section{Numeric Framework Modification}\label{epsilonchange}
In principle, the numerical framework presented in \cite{Winick2018} is tight. We observe the following issue in practice. The near-optimal $\rho$ computed in the first step often has constraint violations, due to the inherent imprecision of convex solvers. At all distances and values of excess noise, and for our particular implementation using \textsc{Matlab} and CVX with the \textsc{Mosek} solver, these violations are typically $10^{-7}-10^{-6}$. At distances approaching $200$ km, the simulated expectation values $\expn$ and $\expnsq$ are both small; approximately $10^{-6}$ for the nonzero values of excess noise we consider. Since the constraint violation is the same order of magnitude as the expectation, the first step solution is effectively an optimal state for double the excess noise. Given the poor scaling of the protocol with excess noise, this implies the approximate key rate from this first step will be much lower than its theoretical value. Since the first step upper bounds the second step, this implies a poor second step result. One way to ameliorate this is to solve with a smaller $N$ so that the solver returns a better first step solution. Thus, purely due to numerical precision issues, solving with $N=30$ instead of $N=40$ can improve the key rate at long distances. Even though the correction term $\Delta(W)$ is slightly larger, this is more than offset by the improved quality of the first step solution. This suggests that due to numerical issues, one should choose the finite dimension carefully, even though analytically a larger dimension is always better.

The reason the first step upper bounds the second step is due to the expansion of the feasible set. Referring to the notation in Appendix D of \cite{Winick2018}, the large constraint violations lead to a large value of $\epsilon'$, which controls how much the set is expanded for the second step. However, the choice $\epsilon'=\max(\epsilon_{rep},\epsilon_{sol})$ (Equation (165) of \cite{Winick2018}) is pessimistic. One only needs to choose $\epsilon'=\epsilon_{rep}$. As noted in Equation (162) of \cite{Winick2018}, this is sufficient to provide a reliable lower bound when accounting for finite numerical precision. Further, note that $f(\rho)$ is lower bounded by a tangent hyperplane at any point in its domain. Thus, it is not necessary to expand the feasible set further to include the point returned by the first step. This change gives improved results in practice, while still being reliable and tight.

In previous work using the numerical framework in \cite{Winick2018}, it has been assumed that $\epsilon_{rep}\leq\epsilon_{sol}$. Hence the issue of how to suitably choose $\epsilon_{rep}$ has not been considered. As noted in \cite{Winick2018}, rigorously determining $\epsilon_{rep}$ for a particular implementation can be an involved process. For our \textsc{Matlab} implementation, which has precision better than $10^{-15}$, we conservatively use $10^{-10}$ for $\epsilon_{rep}'$ and all elements of $\vec{\epsilon}_{rep}$. Finally note that in our numerical evaluation of the unbalanced phase-encoded BB84 protocol, we continue to use the original, larger set expansion as in \cite{Winick2018}. This is to ensure a fair comparison to the flag-state squasher numerical results.

\section{Trusted Detector Noise Operators}
\label{tnqo}
We determine expressions for the relevant operators in the trusted detector noise scenario, focusing on the case where both homodyne detectors in the overall heterodyne setup have the same efficiency $\eta_d$ and electronic noise $\nu_{el}$. The only change in the protocol for the trusted noise scenario is Bob's POVM. The changed POVM enters the optimization in two different ways: through Bob's new observables in the constraints and new region operators in the objective function definition.

Recall Bob's POVM in the ideal case is a projection onto coherent states $\{\frac{1}{\pi} \dyad{\zeta} \}_{\zeta\in\mathbbm{C}}$. In the trusted noise scenario, it is instead a projection onto scaled and displaced thermal states $\left\lbrace \frac{1}{\eta_d \pi} \disp{\frac{\zeta}{\sqrt{\eta_d}}} \rho_{th} (\bar{n}) \dispd{\frac{\zeta}{\sqrt{\eta_d}}} \right\rbrace _{\zeta\in\mathbbm{C}}$ where the mean photon number of the thermal state is $\bar{n}=\frac{1-\eta_d+\nu_{el}}{\eta_d}$. We will simplify our notation by omitting the dependence of $\rho_{th}$ on $\bar{n}$. We denote the POVM elements by $G_\zeta$. Recall our notation $\noisy{\cdot}$ for the noisy version of an operator. From \cite{Lin2020}, if an operator in the ideal detector model is defined as
\begin{equation}
\label{idealf}
X=\int_{\zeta\in\mathbbm{C}} f_X(\zeta) \ \frac{1}{\pi} \dyad{\zeta} \ d^2 \zeta,
\end{equation}
then its noisy counterpart is
\begin{equation}
\label{noisyf}
\noisy{X}=\int_{\zeta\in\mathbbm{C}} f_X(\zeta) \ G_\zeta \ d^2\zeta,
\end{equation}
where $d^2\zeta=d \Re(\zeta) \ d \Im(\zeta)$.
\subsection{Objective Function}
Recall the region operators are defined as $R^j_B=\frac{1}{\pi}\int_{A^j}  \dyad{\zeta} d^2 \zeta$, where $A^j$ are the regions in phase-space in Fig. \ref{keymap}. By definition, the noisy region operators are then
\begin{equation}
\noisy{R^j_B}=\int_{A^j}  G_\zeta \ d^2 \zeta,
\end{equation}
and the noisy POVM is $\noisy{P^k}=\mathbbm{1}_A \otimes \noisy{R^k_B}$. Referring to Eq. \eqref{idealelement}, the matrix elements are
\begin{align}
\noisy{P^k}_{ijmn}&=\bra{m_{\beta_i}} \noisy{R_B^k} \ket{n_{\beta_i}},\\
&= \int_{A^k}  \bra{m_{\beta_i}} G_\zeta  \ket{n_{\beta_i}}d^2 \zeta,\\
&=\frac{1}{\eta_d \pi} \int_{A^k} \bra{m} D\left(\frac{\zeta}{\sqrt{\eta_d}}-\beta_i\right) \rho_{th} D^\dagger\left(\frac{\zeta}{\sqrt{\eta_d}}-\beta_i\right)  \ket{n} d^2 \zeta.
\end{align}
We use the expression in Equation (B1) in \cite{Lin2020} for the matrix elements of the displaced thermal operator in the Fock basis, convert to polar coordinates, and compute the integral in \textsc{Matlab}.

\subsection{Observables}
To express the observables in the form of Eq. \eqref{idealf}, we write them in antinormal ordering, and replace the ladder operators $\lowering, \raising$ with $\zeta,\zeta^*$,
\begin{align}
\hat{n}= \lowering \raising -1 &\implies f_{\hat{n}} (\zeta)=\abs{\zeta}^2 -1, \\
\hat{n}^2= {\lowering}^2 (\raising)^2- 3\hat{n} - 2 &\implies f_{\hat{n}^2} (\zeta)=\abs{\zeta}^4 - 3\abs{\zeta}^2+1.
\end{align}
To find $f$ for the displaced observables, we simply perform a change of variables,
\begin{align}
\hat{n}_\beta &= \disp{\beta} \left( \frac{1}{\pi} \int (\abs{\zeta}^2 -1) \dyad{\zeta} d^2 \zeta\right) \dispd{\beta} \\
&= \frac{1}{\pi} \int (\abs{\zeta}^2 -1) \dyad{\zeta+\beta} d^2\zeta \\
&= \frac{1}{\pi} \int (\abs{\zeta-\beta}^2 -1) \dyad{\zeta} d^2\zeta,
\end{align}
and similarly for $\hat{n}_\beta^2$. Thus,
\begin{align}
\label{dispf}
f_{\hat{n}_\beta} (\zeta)=\abs{\zeta-\beta}^2 -1
\end{align}
and
\begin{align}
\label{dispsqf}
f_{\hat{n}^2_\beta} (\zeta)=\abs{\zeta-\beta}^4 - 3\abs{\zeta-\beta}^2+1.
\end{align}

We can now calculate the noisy observables using Eq. \eqref{noisyf}. We make use of the following identity,
\begin{equation}
\braket{\gamma}{\rho_{th}(\bar{n}) |\gamma}=\frac{e^{-\abs{\gamma}^2/(1+\bar{n})}}{1+\bar{n}}.
\end{equation}
By definition,
\begin{align}
\noisy{\hat{n}_\beta}&= \int (\abs{\zeta-\beta}^2-1) G_\zeta d^2 \zeta\\
&=\frac{\eta_d}{\pi} \int \left(\abs{\zeta-\frac{\beta}{\sqrt{\eta_d}}}^2-\frac{1}{\eta_d}\right) D(\zeta) \rho_{th} D^\dagger(\zeta) \ d^2 \zeta.
\end{align}
Then,
\begin{align}
\bra{\alpha}\noisy{\hat{n}_\beta} \ket{\alpha}&= \frac{\eta_d}{\pi}\int \left(\abs{\zeta-\beta'}^2-\frac{1}{\eta_d}\right)  \bra{\alpha-\zeta} \rho_{th} \ket{\alpha-\zeta} d^2 \zeta\\
&= \frac{\eta_d}{\pi(1+\bar{n})}\int \left(\abs{\zeta+\alpha-\beta'}^2-\frac{1}{\eta_d}\right)  e^{\frac{-\abs{\zeta}^2}{1+\bar{n}}} d^2 \zeta.
\end{align}
where $\beta'=\beta/\sqrt{\eta_d}$. Converting to polar coordinates, the integral is
\begin{align}
\bra{\alpha}\noisy{\hat{n}_\beta} \ket{\alpha} &=\frac{\eta_d}{\pi(1+\bar{n})}\int \left(r^2 + \gamma^* r e^{i\theta} +\gamma r e^{-i\theta}+ \abs{\gamma}^2-\frac{1}{\eta_d} \right) \ e^{\frac{-r^2}{1+\bar{n}}}\ r  \ dr \ d\theta\\
&= \eta_d \abs{\gamma}^2 +\nu_{el},
\end{align}
where $\gamma=\alpha-\beta'$. By the uniqueness of the Husimi Q-function,
\begin{equation}
\noisy{\hat{n}_\beta}=\eta_d \hat{n}_{\frac{\beta}{\sqrt{\eta_d}}} + \nu_{el} \mathbbm{1}.
\end{equation}

Similarly,
\begin{align}
\noisy{\hat{n}_\beta^2} &=  \frac{1}{\pi \eta_d} \int \left(\abs{\zeta-\beta}^4-3\abs{\zeta-\beta}^2+1\right)\disp{\frac{\zeta}{\sqrt{\eta_d}}} \rho_{th} \dispd{\frac{\zeta}{\sqrt{\eta_d}}}  d^2 \zeta\\
&=  \frac{1}{\pi} \int \left(\eta_d^2\abs{\zeta-\beta'}^4-3\eta_d\abs{\zeta-\beta'}^2+1\right)\disp{\zeta} \rho_{th} \dispd{\zeta}  d^2 \zeta.
\end{align}
The Q-function is then
\begin{align}
\bra{\alpha} \noisy{\hat{n}_\beta^2} \ket{\alpha} &=\frac{1}{\pi(1+\bar{n})} \int \left(\eta_d^2\abs{\zeta+\alpha-\beta'}^4-3\eta_d\abs{\zeta+\alpha-\beta'}^2+1\right) e^{\frac{-\abs{\zeta}^2}{1+\bar{n}}} d^2 \zeta\\
\begin{split}
&=\frac{1}{\pi(1+\bar{n})} \int \left(\eta_d^2(r^4+4r^2\abs{\gamma}^2+\abs{\gamma}^4)-3\eta_d(r^2+\abs{\gamma}^2)  +1\right) e^{\frac{-r^2}{1+\bar{n}}} \ r \ dr \ d\theta
\end{split}\\
\begin{split}
&=2\left(1+2\nu_{el}+\nu_{el}^2+2\eta_d (1+\nu_{el})\abs{\gamma}^2-3\frac{1+\nu_{el}}{2}+\eta_d^2\abs{\gamma}^4\frac{1}{2}-\frac{3}{2}\eta_d\abs{\gamma}^2+\frac{1}{2}\right)
\end{split}\\
&=\eta_d^2\abs{\gamma}^4+\eta_d (4\nu_{el}+1)\abs{\gamma}^2 +2\nu_{el}^2+\nu_{el}.
\end{align}
Again, by the uniqueness of the Husimi Q-function,
\begin{equation}
\noisy{\hat{n}_\beta^2}=\eta_d^2  \hat{n}_{\frac{\beta}{\sqrt{\eta_d}}}^2+\eta_d (4\nu_{el}+1-\eta_d) \hat{n}_{\frac{\beta}{\sqrt{\eta_d}}} +(2\nu_{el}^2+\nu_{el})\mathbbm{1}.
\end{equation}

\section{Simulated Expectations and Error-Correction Cost}\label{expectationapp}
We discuss how the coarse-grained expectations can be determined from a heterodyne measurement, and what the expectation values are for the simulation. We focus on the trusted detector noise scenario, as the ideal detector results can be recovered as a special case. 

Bob's measurement results determine a probability density $p(\zeta)=\Tr(\rho  G_\zeta)$ over the complex plane. In general, given an observable $\Gamma=\int_{\zeta\in\mathbbm{C}} f_\Gamma(\zeta) \ G_\zeta \ d^2\zeta$, the expectation value is then
\begin{equation}
\label{expectationintegral}
\Tr(\rho \Gamma)= \int_{\zeta\in\mathbbm{C}} f_\Gamma(\zeta) \ p(\zeta) \ d^2\zeta.
\end{equation}
For the $i^{th}$ conditional state, Bob's coarse-grained observables are $\obsn$ and $\obsnsq$. The corresponding functions $f_{\obsn}$ and $f_{\obsnsq}$ are given in Eqs. {\eqref{dispf}} and {\eqref{dispsqf}}. Using his measurement result $p(\zeta)$, Bob can thus compute the integral {\eqref{expectationintegral}} to determine the desired expectations for each conditional state.

Note that for the typical quadratures $X$ and $P$, we have that 
\begin{align}
X = \frac{1}{\sqrt{2}} (\raising+\lowering) \implies f_{X} (\zeta)= \sqrt{2} \Re(\zeta), \\
P = \frac{i}{\sqrt{2}} (\raising-\lowering) \implies f_{P} (\zeta)= \sqrt{2} \Im(\zeta).
\end{align}
Thus, by expanding $f_{\obsn}(\zeta)$ and $f_{\obsnsq}(\zeta)$ as polynomials in $\Re(\zeta)$ and $\Im(\zeta)$, we can also relate the expectations of $\obsn$ and $\obsnsq$ to the moments and cross-terms of the measurement data of quadratures $X$ and $P$.

We now consider the expectations under the simulated channel model. After passing through a Gaussian channel with loss $\eta$ and excess noise $\xi$, a coherent signal state becomes a displaced thermal state
\begin{equation}
\dyad{\alpha_i} \rightarrow \disp{\beta_i} \rho_{th}(\tfrac{\delta}{2}) \dispd{\beta_i}
\end{equation}
where $\delta=\eta \xi$ and $\beta_i=\sqrt{\eta} \alpha_i$. The expectation values for each conditional state are straightforward to calculate,
\begin{equation}
\Tr(\obsn \disp{\beta_i} \rho_{th}(\tfrac{\delta}{2}) \dispd{\beta_i})=\Tr(\hat{n} \ \rho_{th}(\tfrac{\delta}{2}) ) =\frac{\delta}{2}
\end{equation}
and
\begin{equation}
\Tr(\obsnsq \disp{\beta_i} \rho_{th}(\tfrac{\delta}{2}) \dispd{\beta_i})=\frac{\delta(1+\delta)}{2}.
\end{equation}

For the reduced state constraint, we use the well-known formula for the overlap of two coherent states \cite{Gerry2004}
\begin{equation}
\braket{\alpha_j}{\alpha_i}=\exp(i \Im(\alpha_i \alpha_j^*)-\frac{1}{2} \abs{\alpha_i-\alpha_j}^2 ).
\end{equation}

The error-correction cost is determined by the simulated joint probability distribution. Given Alice prepares $\ket{\alpha_i}$, the probability Bob gets the key map outcome $j$, with $i\in\{0,1,2,3\}$ and $j\in\{0,1,2,3,\perp\}$, is given by the following integral
\begin{align}
p(j|i)&=\Tr(R^j \disp{\beta_i} \rho_{th}(\tfrac{\delta}{2}) \dispd{\beta_i} )\\
 &= \int_{A^j} \Tr(G_\zeta \disp{\beta_i} \rho_{th}(\tfrac{\delta}{2}) \dispd{\beta_i}) d^2 \zeta.\label{probintegral}
\end{align}
The integrand, which is the overlap of two displaced thermal states, is given by \cite{Lin2020}
\begin{equation}
\Tr(G_\zeta \disp{\beta_i} \rho_{th}(\tfrac{\delta}{2}) \dispd{\beta_i}) = \frac{1}{\pi(1+\frac{1}{2}\eta_d \delta+\nu_{el})}\exp(\frac{-\abs{\zeta-\sqrt{\eta_d}\beta_i}^2}{1+\frac{1}{2}\eta_d \delta+\nu_{el}}).
\end{equation}
The integral in Eq. \eqref{probintegral} is converted to polar coordinates and computed in \textsc{Matlab}. As the signal states are distributed uniformly, $p_A(i)=\frac{1}{4}$. Then, $p_{AB}(i,j)=\frac{1}{4}p(j|i)$. Since discarded signals do not incur an error-correction cost, we remove the outcome $j=\perp$ and renormalize $p$ accordingly. Denoting this sifted probability distribution by $q$, the error-correction cost is
\begin{equation}
\delta^{leak}_{EC}= 2-\beta_{EC} [H(q_A)+H(q_B)-H(q_{AB})].
\end{equation}

\end{document}